\numberwithin{equation}{section}
\theoremstyle{plain}
\newtheorem{theorem}{Theorem}
\newtheorem*{thmmain}{Theorem $\bd{2'}$}
\newtheorem{lemma}{Lemma}[section]
\newtheorem{prop}[lemma]{Proposition}
\newtheorem{coro}[lemma]{Corollary}
\theoremstyle{definition}
\newtheorem{assump}{Assumption}
\newtheorem*{assumpA}{Assumption A$\bd{'}$}
\theoremstyle{remark}
\newtheorem*{remark}{Remark}
\DeclareMathOperator{\dist}{dist}
\DeclareMathOperator{\spec}{spec}
\newcommand{\ud}{\,\mathrm{d}}
\newcommand{\RR}{\mathbb{R}}
\newcommand{\NN}{\mathbb{N}}
\newcommand{\ZZ}{\mathbb{Z}}
\newcommand{\TT}{\mathrm{T}}
\newcommand{\bd}[1]{\boldsymbol{#1}}
\newcommand{\wt}[1]{\widetilde{#1}}
\newcommand{\wh}[1]{\widehat{#1}}
\newcommand{\mc}[1]{\mathcal{#1}}
\newcommand{\ms}[1]{\mathscr{#1}}
\newcommand{\veps}{\varepsilon}
\newcommand{\abs}[1]{\lvert#1\rvert}
\newcommand{\norm}[1]{\lVert#1\rVert}
\newcommand{\average}[1]{\langle#1\rangle}
\newcommand{\per}{\mathrm{per}}
\newcommand{\tot}{\mathrm{tot}}
\newcommand{\CB}{\mathrm{CB}}
\newcommand{\nn}{\nonumber}
\newcommand{\barint}{\kern4pt \raise3.4pt\hbox{\vrule height.6pt
    width7pt} \kern-11pt \int}
\begin{document}

\title[Cauchy-Born rule for spin-polarized TFDW model]{Cauchy-Born
  rule and spin density wave for the spin-polarized
  Thomas-Fermi-Dirac-von~Weizs\"{a}cker model}

\author{Weinan E} 
\address{Department of Mathematics and Program in Applied and
   Computational Mathematics \\
   Princeton University \\
   Princeton, NJ 08544 \\
   weinan@math.princeton.edu}

\author{Jianfeng Lu} 
\address{Department of Mathematics \\
   Courant Institute of Mathematical Sciences \\
   New York University \\ 
   New York, NY 10012 \\
   jianfeng@cims.nyu.edu}

\date{June 17, 2010}

\thanks{The work of W.~E was supported in part by the NSF grant
  DMS-0708026, grant DMS-0914336, the ONR grant N00014-01-1-0674 and
  the DOE grant DE-FG02-03ER25587. J.~Lu would like to thank Professor
  Robert V. Kohn for helpful discussions.}

\begin{abstract}
  The electronic structure (electron charges and spins) of a perfect
  crystal under external magnetic field is analyzed using the
  spin-polarized Thomas-Fermi-Dirac-von~Weizs\"{a}cker model. An
  extension of the classical Cauchy-Born rule for crystal lattices is
  established for the electronic structure under sharp stability
  conditions on charge density wave and spin density wave. A
  Landau-Lifshitz type micromagnetic energy functional is derived.
\end{abstract}

\maketitle

\section{Introduction}

The present paper is the fourth of a series of papers that are devoted
to the study of the electronic structure of smoothly deformed crystals
or crystals in an external field, by analyzing various quantum
mechanics models at different levels of complexity, including the
Kohn-Sham density functional theory, Thomas-Fermi type of models and
tight-binding models. Our overall objective is to establish the
microscopic foundation of the continuum theories of solids, such as
the nonlinear elasticity theory and the theory of magnetic materials,
in terms of quantum mechanics and to examine the boundary where the
continuum theories break down.  

In previous work \cite{ELu:ARMA, ELu:CPAM, ELu:KohnSham}, we have
studied the nonlinear tight-binding model and the Kohn-Sham density
functional theory, and established their continuum limits for smoothly
deformed crystals.  As a byproduct, we also derived macroscopic models
for the piezoelectric effect of a material from many-body quantum
theory.  In the present work, we will focus on the magnetic properties
of a material and the associated spin waves.  We choose the
spin-polarized Thomas-Fermi-Dirac-von~Weizs\"acker (TFDW) model as our
starting point.  This is a representative of a simplest version of the
density functional theory in which the electronic structure is
represented solely by the electron density instead of the wave
functions as in the Kohn-Sham theory.  As in the previous work, our
reference point is the electronic structure of a perfect crystal,
\textit{i.e.}  the equilibrium crystal lattice without external
fields. The solution of the TFDW model we are interested in is a
continuation of the solution for the unperturbed system. This works as
long as certain stability conditions are satisfied. One main objective
is to identify these stability conditions, with emphasis on the
stability of spin density waves (magnons). The overall strategy is
similar to the one in the previous paper \cite{ELu:CPAM} and
\cite{ELu:KohnSham}.

Another related work that we should mention is
\cite{BlancLeBrisLions:02} where the Thomas-Fermi-von~Weizs\"acker
model (without spin-polarization) was studied for smoothly deformed
crystals. It was shown that in the continuum limit, the total energy
of the system converges to a limiting value given by the extended
Cauchy-Born construction. There, the stability condition is
automatically satisfied since the model is convex.  \cite{ELu:07}
extended this kind of results to the tight-binding models. The
strategy in the current series of papers is quite different from those
in \cite{BlancLeBrisLions:02} or \cite{ELu:07}.  Here our emphasis is
on the stability conditions.

Throughout this paper, we use the notation $\lesssim$ for inequalities
up to an absolute constant: $f \lesssim g$ if $f \le C g$ where $C$ is
an absolute constant. Sometimes, it is more convenient to explicitly
use $C$ to denote the constant, which might change from line to line.
When it is necessary to specify the dependence of the constant on
parameters, we will use the notation $C(a, b)$ to indicate that the
constant depends on parameters $a$ and $b$. 

Standard notations are used for function spaces like $L^p$, $H^k$ and
$W^{k,p}$ so on. We also need function spaces for periodic functions.
Let $\mathbb{L} \subset \RR^3$ be a lattice with unit cell $\Gamma$.
Denote the reciprocal lattice as $\mathbb{L}^{\ast}$ and its unit cell
(the first Brillouin zone) as $\Gamma^{\ast}$. For a given $n \in
\NN$, we define
\begin{equation}
  L_n^p = \{ f\in\ms{S}'(\RR^3) \mid \tau_R f = f,\ \forall R \in
  n\mathbb{L}; \int_{n\Gamma} \abs{f}^p \ud x < \infty \}, 
\end{equation}
with norm given by
\begin{equation}
  \norm{f}_{L_n^p} = \left(n^{-3} \int_{n\Gamma} \abs{f}^p \ud x \right)^{1/p},
\end{equation} 
and similarly for $L_n^{\infty}$. Here $\tau_R$ is the translational
operator with translation vector $R$, $(\tau_R f)(x) = f(x-R)$.  The
periodic Sobolev space $H_n^k$ is defined similarly
\begin{equation}
  H_n^k = \{ f\in\ms{S}'(\RR^3) \mid \tau_R f = f,\ \forall R \in
  n\mathbb{L}; f \in H^{k}(n\Gamma) \}, \quad k \in \ZZ_+
\end{equation}
with norm
\begin{equation}
  \norm{f}_{H_n^k} = \sum_{\abs{\alpha} \leq k} \norm{\partial^{\alpha} f}_{L_n^2},
\end{equation}
where $\alpha$ denotes a multi-index and $\partial^{\alpha}$ the
corresponding partial derivative. Moreover, we also define the
periodic homogeneous Sobolev space with index $-1$ as
\begin{equation}
  \dot{H}^{-1}(n \Gamma) = \{ f \in \ms{S}'(\RR^3) \mid
  \tau_R f = f \ \forall R\in n\mathbb{L}, \sum_{k\in\mathbb{L}^{\ast}/n}
  \frac{1}{\abs{k}^2}\abs{\wh{f}(k)}^2 < \infty \}.
\end{equation}
Here, $\{\wh{f}(k)\}$ denotes the Fourier coefficients of the
$n\Gamma$-periodic function $f$
\begin{equation}
\wh{f}(k) = (2\pi)^{-3/2} \int_{n\Gamma} f(x) e^{-ik\cdot x} \ud x. 
\end{equation}
The space $\dot{H}^{-1}(n \Gamma)$ is a Hilbert space with inner
product
\begin{equation}
  \langle f, g\rangle_{\dot{H}^{-1}(n \Gamma)} = 4\pi  
  \sum_{k\in\mathbb{L}^{\ast}/n}\dfrac{1}{\abs{k}^2} 
    \overline{\wh{f}(k)}\wh{g}(k).
\end{equation}

For Banach spaces $X$ and $Y$, $\ms{L}(X, Y)$ denotes the class of bounded
linear operators from $X$ to $Y$ and $\norm{\cdot}_{\ms{L}(X,Y)}$
denotes the operator norm. When $X = Y$, we
use $\ms{L}(X) = \ms{L}(X,  X)$. 

\section{Spin-polarized Thomas-Fermi-Dirac-von~Weizs\"{a}cker model}

We consider the spin-polarized Thomas-Fermi-Dirac-von~Weizs\"{a}cker
(TFDW) model. We will restrict ourselves to the collinear
case, that is, we assume that the applied magnetic field is parallel
to a fixed axis (with possibly varying amplitude).  In the
spin-polarized TFDW model, the electronic structure is characterized
by the spin-up and spin-down densities, corresponding to the density
of spin-up and spin-down electrons. Denote  by $\rho_{+}$ and $\rho_{-}$
the spin-up and spin-down densities respectively, the energy of the
system is given by
\begin{multline}\label{eq:sTFDW}
  I^h(\rho_{+}, \rho_{-}) = \int \rho_+^{5/3} + \rho_-^{5/3} +
  \abs{\nabla \sqrt{\rho_+}}^2 + \abs{\nabla \sqrt{\rho_-}}^2
  - \int \rho_+^{4/3} + \rho_-^{4/3} \\
  + \frac{1}{2}D(\rho - \rho_b, \rho - \rho_b) - \int h m.
\end{multline}
Here $\rho = \rho_+ + \rho_-$ is the total charge density, $m = \rho_+ -
\rho_-$ is the spin density or magnetization density, and $\rho_b$ is
charge contribution from the nuclei and core 
electrons.  The shorthand notation $D(\cdot, \cdot)$ is defined as
\begin{equation*}
  D(f, g) = \iint \frac{f(x) g(y)}{\abs{x - y}} \ud x \ud y.
\end{equation*}
Finally, $h$ is the external magnetic field, which is a scalar, since we have
assumed the collinearity. 

The density satisfies the normalization constraint
\begin{equation}\label{eq:normrho}
  \int \rho = \int \rho_b,
\end{equation}
and  the positivity constraint $\rho_{+},\, \rho_{-} \geq 0$. To
deal with the positivity constraint, it is often convenient to
introduce the variables
\begin{equation*}
  \nu_+ = \sqrt{\rho_+}, \qquad \nu_- = \sqrt{\rho_-}.
\end{equation*}
In terms of $\nu_+$ and $\nu_-$, we have 
\begin{multline}\label{eq:sTFDWnu}
  I^h(\nu_{+}, \nu_{-}) = \int \nu_+^{10/3} + \nu_-^{10/3} +
  \abs{\nabla \nu_{+}}^2 + \abs{\nabla \nu_{-}}^2
  - \int \nu_+^{8/3} + \nu_-^{8/3} \\
  + \frac{1}{2} D(\rho - \rho_b, \rho - \rho_b) - \int h m,
\end{multline}
where 
\begin{equation*}
  \rho = \nu_{+}^2 + \nu_{-}^2, \qquad m = \nu_{+}^2 - \nu_{-}^2. 
\end{equation*}
The energy is given by the variational problem
\begin{equation}
  E^h = \inf_{\nu_{\pm} \in H^1(\RR^3)} I^h(\nu_+, \nu_-)
\end{equation}
with the normalization constraint \eqref{eq:normrho}. The
Euler-Lagrange equation associated with the energy functional reads,
\begin{align}
  & - \Delta \nu_+ + \tfrac{5}{3} \nu_+^{7/3} - \tfrac{4}{3}
  \nu_+^{5/3} + V \nu_+ - h \nu_+ = \lambda \nu_+; \\
  & - \Delta \nu_- + \tfrac{5}{3} \nu_-^{7/3} - \tfrac{4}{3}
  \nu_-^{5/3} + V \nu_- + h \nu_- = \lambda \nu_-; \\
  & - \Delta V = 4\pi (\rho - \rho_b),
\end{align}
where $\lambda$ is a Lagrange multiplier corresponds to the
normalization constraint.

In this work, we will consider the TFDW model for crystals with and
without the external magnetic field. We denote $\mathbb{L}$ the
underlying Bravais lattice of the crystal and $\Gamma$ the unit cell
of the crystal. The charge background $\rho_b$ is $\Gamma$-periodic
and assumed to be smooth (in other words, we are taking a
pseudo-potential approximation).  When there is no external magnetic
field, the electronic structure is characterized by the periodic-TFDW
model, with energy functional
\begin{multline}\label{eq:sTFDWunitcell}
  I_{\Gamma}^0(\nu_{+}, \nu_{-}) = \int_{\Gamma} \nu_+^{10/3} +
  \nu_-^{10/3} + \abs{\nabla \nu_{+}}^2 + \abs{\nabla \nu_{-}}^2
  - \int_{\Gamma} \nu_+^{8/3} + \nu_-^{8/3} \\
  + \frac{1}{2}D_{\Gamma}(\rho - \rho_b, \rho - \rho_b),
\end{multline}
where the periodic Coulomb interaction $D_{\Gamma}$ is given by 
\begin{equation*}
  D_{\Gamma}(f, g) = \langle f, g \rangle_{\dot{H}^{-1}(\Gamma)}.
\end{equation*}
The electronic structure is given by the variational problem 
\begin{equation}\label{eq:vpper}
  E_{\Gamma}^0 = \inf_{\nu_{\pm} \in H_1^1} I_{\Gamma}^0(\nu_{+}, \nu_{-})
\end{equation}
with the normalization constraint that
\begin{equation*}
  \int_{\Gamma} \nu_+^2 + \nu_-^2 = \int_{\Gamma} \rho_b = Z.
\end{equation*}
Here $Z$ is some fixed integer.

The Euler-Lagrange equations associated with \eqref{eq:vpper} are
given by
\begin{align}
  \label{eq:nuplusunit} & - \Delta \nu_+ + \tfrac{5}{3} \nu_+^{7/3} -
  \tfrac{4}{3}
  \nu_+^{5/3} + V \nu_+ = 0; \\
  & - \Delta \nu_- + \tfrac{5}{3} \nu_-^{7/3} - \tfrac{4}{3}
  \nu_-^{5/3} + V \nu_- = 0; \\
  \label{eq:coulombV} & - \Delta V = 4\pi (\rho - \rho_b).
\end{align}
The equations are defined on $\Gamma$ with periodic boundary
conditions. Note that we have absorbed the Lagrange multiplier into
the potential $V$, as \eqref{eq:coulombV} only determines $V$ up to a
constant. The normalization constraint is also implicitly imposed as
the solvability condition of \eqref{eq:coulombV}.

It is easy to see that the minimum of the variational problem
\eqref{eq:vpper} is achieved and the minimizers satisfy the
Euler-Lagrange equations.  The minimizers might not be unique since
the Dirac term $(-\int \nu_+^{8/3} + \nu_-^{8/3})$ is concave. Let us
take one of the minimizers, denoted as $\nu_{+, \per}$ and $\nu_{-,
  \per}$, and let us denote the corresponding potential as $V_{\per}$
(with the Lagrange multiplier included). By standard elliptic
regularity theory, it is easy to see that $\nu_{\pm, \per},\ V_{\per}
\in C^{\infty}(\Gamma)$. We extend them to the whole $\RR^3$
periodically. It is also straightforward to see that $\nu_{\pm, \per}$
are non-negative.  It is possible that $\nu_{-, \per} \equiv 0$ is a
minimizer (while the corresponding $\nu_{+, \per} > 0$).  To exclude
such cases, we will assume that there exists a positive constant
$C_{\nu}$ such that
\begin{equation}\label{eq:positive}
  \nu_{\pm, \per}(x) \geq C_{\nu}, \quad x \in \Gamma. 
\end{equation}

We may also consider the energy functional defined on the supercell $n
\Gamma$ for $n \in \NN$:
\begin{multline}\label{eq:sTFDWncell}
  I_{n \Gamma}^0(\nu_{+}, \nu_{-}) = \int_{n \Gamma} \nu_+^{10/3} +
  \nu_-^{10/3} + \abs{\nabla \nu_{+}}^2 + \abs{\nabla \nu_{-}}^2
  - \int_{n \Gamma} \nu_+^{8/3} + \nu_-^{8/3} \\
  + \frac{1}{2} D_{n \Gamma}(\rho - \rho_b, \rho - \rho_b),
\end{multline}
where 
\begin{equation*}
  D_{n\Gamma}(f, g) = \langle f, g \rangle_{\dot{H}^{-1}(n\Gamma)}.
\end{equation*}
The corresponding Euler-Lagrange equations have the same  form  as
\eqref{eq:nuplusunit}--\eqref{eq:coulombV} defined now on $n\Gamma$
with periodic boundary conditions. The functions $\nu_{\pm, \per},\,
V_{\per}$ (recall that we have extended them to $\RR^3$ periodically)
still satisfy the Euler-Lagrange equation and hence are stationary
points of the functional \eqref{eq:sTFDWncell}.

\section{Stability analysis}

\subsection{Stability of Electronic Structure}

For any $n \in \NN$, let us define the linear operator $\mc{L}$, which
is the linearization of the Euler-Lagrange equations
\eqref{eq:nuplusunit}--\eqref{eq:coulombV} on the domain $n\Gamma$, given
by
\begin{equation}
  \mc{L}   
  \begin{pmatrix}
    \omega_+ \\
    \omega_- \\
    W
  \end{pmatrix}
  = 
  \begin{pmatrix}
    \mc{L}_+ & 0 & \nu_+  \\
    0 & \mc{L}_- & \nu_- \\
    \nu_+ & \nu_- & \frac{1}{8\pi} \Delta
  \end{pmatrix}
  \begin{pmatrix}
    \omega_+ \\
    \omega_- \\
    W 
  \end{pmatrix},
\end{equation}
where the operators $\mc{L}_{\pm}$ are given by
\begin{equation}
  \mc{L}_{\pm} \omega = - \Delta \omega + \frac{35}{9} \nu_{\pm}^{4/3}
  \omega - \frac{20}{9} \nu_{\pm}^{2/3} \omega + V \omega.
\end{equation}

\begin{prop}\label{prop:boundL}
  Assume $\nu_{\pm}, \, V \in W_n^{k, \infty}$, for some $k \in \NN$. Then
  $\mc{L}$ is a bounded operator from $(H^{k+2}_n)^3$ to $(H^k_n)^3$,
  \begin{equation*}
    \norm{ \mc{L} }_{\ms{L}((H^{k+2}_n)^3, (H^k_n)^3)} \lesssim 
    \norm{\nu_+}_{W_n^{k, \infty}} + \norm{\nu_-}_{W_n^{k, \infty}} + 
    \norm{V}_{W_n^{k, \infty}} + 1.
  \end{equation*}
  Moreover, if $\nu_{\pm}, \, V \in L_n^{\infty}$, then $\mc{L}$ is a
  self-adjoint operator on the domain $\mc{D}(\mc{L}) = (H^2_n)^3$.
\end{prop}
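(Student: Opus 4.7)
The plan is to decompose $\mc{L} = \mc{L}_0 + \mc{V}$, where $\mc{L}_0 = \mathrm{diag}(-\Delta, -\Delta, \tfrac{1}{8\pi}\Delta)$ contains all the derivative terms and $\mc{V}$ is the matrix of multiplication operators with entries $\tfrac{35}{9}\nu_{\pm}^{4/3} - \tfrac{20}{9}\nu_{\pm}^{2/3} + V$ on the diagonal and $\nu_{\pm}$ off the diagonal. The two claims are then handled separately.

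For the boundedness statement, the periodic Laplacian satisfies $\norm{\pm\Delta\omega}_{H^k_n} \leq \norm{\omega}_{H^{k+2}_n}$ componentwise, so it suffices to bound $\mc{V}$ from $(H^{k+2}_n)^3$ into $(H^k_n)^3$. The Leibniz rule in periodic Sobolev spaces yields the product estimate $\norm{f\omega}_{H^k_n} \lesssim \norm{f}_{W^{k,\infty}_n}\norm{\omega}_{H^k_n}$ for any $f \in W^{k,\infty}_n$, and the embedding $H^{k+2}_n \hookrightarrow H^k_n$ is a contraction. Applying this with $f \in \{V, \nu_{\pm}\}$ handles the linear coefficients. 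For the fractional powers $\nu_{\pm}^{4/3}$ and $\nu_{\pm}^{2/3}$, I would use the chain rule (Fa\`a di Bruno) to bound $\norm{\nu_{\pm}^{p}}_{W^{k,\infty}_n}$ in terms of $\norm{\nu_{\pm}}_{W^{k,\infty}_n}$ together with a positive lower bound on $\nu_{\pm}$; here one implicitly invokes the positivity condition \eqref{eq:positive} so that $t \mapsto t^p$ is smooth on the range of $\nu_{\pm}$. Summing the nine entrywise bounds yields the stated operator norm estimate.

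For the self-adjointness claim, I will use the standard fact that adding a bounded symmetric operator to a self-adjoint operator produces a self-adjoint operator on the same domain (a trivial case of Kato-Rellich, since a bounded perturbation is relatively bounded with relative bound zero). On each component the periodic Laplacian is a standard self-adjoint operator on $L^2_n$ with domain $H^2_n$, so $\mc{L}_0$ is self-adjoint on $(H^2_n)^3$. When $\nu_{\pm}, V \in L^{\infty}_n$ (with $\nu_{\pm} \geq 0$ bounded, so $\nu_{\pm}^{4/3}, \nu_{\pm}^{2/3} \in L^{\infty}_n$ as well), every entry of $\mc{V}$ is a bounded multiplication operator on $L^2_n$, and the symmetric matrix form makes $\mc{V}$ bounded and symmetric on $(L^2_n)^3$. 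Therefore $\mc{L} = \mc{L}_0 + \mc{V}$ is self-adjoint on $(H^2_n)^3$.

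The main obstacle is the bookkeeping for the nonlinear coefficients $\nu_{\pm}^{4/3}$ and $\nu_{\pm}^{2/3}$: in $L^{\infty}_n$ this is immediate, but in $W^{k,\infty}_n$ for $k \geq 1$ one genuinely needs $\nu_{\pm}$ to be bounded away from zero, so that the fractional powers are smooth compositions; this is the role of the positivity hypothesis \eqref{eq:positive} that implicitly accompanies the intended use of the proposition. Once this is granted, the rest of the argument reduces to routine applications of the product and chain rules in periodic Sobolev spaces and the Kato-Rellich perturbation principle.
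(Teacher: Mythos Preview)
Your proof is correct and follows essentially the same route as the paper: split $\mc{L}$ into the diagonal Laplacian block and a matrix of multiplication operators, bound the latter entrywise via the product (Leibniz) estimate $\norm{f\omega}_{H^k_n} \lesssim \norm{f}_{W^{k,\infty}_n}\norm{\omega}_{H^k_n}$ for the first claim, and invoke Kato--Rellich (bounded symmetric perturbation of a self-adjoint operator) for the second. Your explicit observation that controlling $\nu_{\pm}^{4/3},\,\nu_{\pm}^{2/3}$ in $W^{k,\infty}_n$ for $k\geq 1$ genuinely requires the lower bound \eqref{eq:positive} is a detail the paper's proof simply absorbs into the constant $C(\norm{\nu_{\pm}}_{W^{k,\infty}_n},\norm{V}_{W^{k,\infty}_n})$ without comment.
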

\begin{proof}
  Denote
  \begin{equation*}
  \begin{pmatrix}
    f_+ \\ f_- \\ g
  \end{pmatrix} 
  \equiv
  \mc{L}
  \begin{pmatrix}
    \omega_+ \\ \omega_- \\ W 
  \end{pmatrix}, 
  \end{equation*}
  then we have
  \begin{align*}
    f_+ & = - \Delta \omega_+ + \frac{35}{9} \nu_{+}^{4/3} \omega_+
    - \frac{20}{9} \nu_{+}^{2/3} \omega_+ + V \omega_+ + W \nu_+; \\
    f_- & = - \Delta \omega_- + \frac{35}{9} \nu_{-}^{4/3} \omega_-
    - \frac{20}{9} \nu_{-}^{2/3} \omega_- + V \omega_- + W \nu_-; \\
    g & = \frac{1}{8\pi} \Delta W + \nu_+ \omega_+ + \nu_- \omega_-.
  \end{align*}
  It follows that
  \begin{align*}
    \norm{f_+}_{H^k_n} & \leq C(\norm{\nu_+}_{W_n^{k, \infty}},
    \norm{V}_{W_n^{k, \infty}}) \norm{\omega_+}_{H^{k+2}_n} +
    \norm{\nu_+}_{W_n^{k, \infty}} \norm{W}_{H^{k+2}_n}; \\
    \norm{f_-}_{H^k_n} & \leq C(\norm{\nu_-}_{W_n^{k, \infty}},
    \norm{V}_{W_n^{k, \infty}}) \norm{\omega_-}_{H^{k+2}_n} +
    \norm{\nu_-}_{W_n^{k, \infty}} \norm{W}_{H^{k+2}_n}; \\
    \norm{g}_{H^k_n} & \leq \frac{1}{8\pi} \norm{W}_{H^{k+2}_n} +
    \norm{\nu_+}_{W_n^{k, \infty}} \norm{\omega_+}_{H^{k+2}_n} +
    \norm{\nu_-}_{W_n^{k, \infty}} \norm{\omega_-}_{H^{k+2}_n}.
  \end{align*}
  Hence, $\mc{L}$ is bounded with the desired estimate on the operator
  norm.

  The self-adjointness of $\mc{L}$ is an easy consequence of the
  Kato-Rellich theorem \cite{ReedSimon2}, since $\nu_{\pm}$ and $V$
  viewed as multiplicative operators on $H_n^2$ are
  infinitesimally small with respect to the Laplacian operator.
\end{proof}

Consider specifically the case when  $\nu_{\pm}$ and $V$ given
by the unperturbed system:
\begin{equation}
  \mc{L}_{\per} = 
  \begin{pmatrix}
    \mc{L}_{+,\per} & 0 & \nu_{+,\per}  \\
    0 & \mc{L}_{-,\per} & \nu_{-,\per} \\
    \nu_{+,\per} & \nu_{-,\per} & \frac{1}{8\pi} \Delta
  \end{pmatrix},
\end{equation}
with $\mc{L}_{\pm, \per}$ defined by 
\begin{equation}
  \mc{L}_{\pm, \per} \omega = - \Delta \omega + \frac{35}{9} \nu_{\pm,
    \per}^{4/3} \omega - \frac{20}{9} \nu_{\pm, \per}^{2/3} \omega 
  + V_{\per} \omega.
\end{equation}

\begin{assump}[Stability of the electronic structure]\label{assump:stability}
  There exists a constant $M$ independent of $n$, such that for any
  $n$,
  \begin{equation*}
    \norm{ \mc{L}_{\per}^{-1} }_{\ms{L}((L_n^2)^3)} \leq M,
  \end{equation*}
  or equivalently, since $\mc{L}_{\per}$ is self-adjoint
  \begin{equation*}
    \dist(0, \spec(\mc{L}_{\per})) \geq 1/M.
  \end{equation*}
\end{assump}

Under the stability assumption, we can actually obtain estimates of
$\mc{L}_{\per}^{-1}$ acting on higher order Sobolev spaces. The
following result is standard, we include the proof here for
completeness.
\begin{prop}\label{prop:regularityIL}
  Under Assumption~\ref{assump:stability} and assume $\nu_{\pm}, \, V
  \in W_n^{k, \infty}$, for some $k \in \NN$.  Then we have
  \begin{equation*}
    \norm{ \mc{L}_{\per}^{-1} }_{\ms{L}((H_n^k)^3, (H_n^{k+2})^3)} \leq C(k) M.
  \end{equation*}
\end{prop}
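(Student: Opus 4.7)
The plan is to establish the bound as an a priori estimate on $u := \mc{L}_{\per}^{-1} F$, combining Assumption~\ref{assump:stability} at the $L_n^2$ level with periodic elliptic regularity for the Laplacian and an interpolation step to keep the $M$-dependence of the final constant linear.

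Writing $u = (\omega_+, \omega_-, W)$ and reading off the components of $\mc{L}_{\per} u = F$, the equations
\begin{align*}
-\Delta \omega_\pm &= f_\pm - \Bigl(\tfrac{35}{9}\nu_{\pm,\per}^{4/3} - \tfrac{20}{9}\nu_{\pm,\per}^{2/3} + V_{\per}\Bigr)\omega_\pm - \nu_{\pm,\per} W, \\
\tfrac{1}{8\pi}\Delta W &= g - \nu_{+,\per}\omega_+ - \nu_{-,\per}\omega_-
\end{align*}
express $\Delta u$ as $F$ plus products of the coefficients $\nu_{\pm,\per}^{4/3}, \nu_{\pm,\per}^{2/3}, \nu_{\pm,\per}, V_{\per}$ against the components of $u$. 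All these coefficients lie in $W_n^{k,\infty}$: for $V_{\per}$ and $\nu_{\pm,\per}$ this is the hypothesis, and for the fractional powers it follows from the pointwise lower bound \eqref{eq:positive} (which keeps $\nu_{\pm,\per}$ away from zero, so $t \mapsto t^{4/3}, t^{2/3}$ are smooth on its range) together with the chain rule. Since $H_n^k$ is a module over $W_n^{k,\infty}$ multiplication, one obtains
\begin{equation*}
\|\Delta u\|_{(H_n^k)^3} \leq C_0 \bigl(\|F\|_{(H_n^k)^3} + \|u\|_{(H_n^k)^3}\bigr),
\end{equation*}
where $C_0$ depends on $k$ and on the $W_n^{k,\infty}$ norms of the coefficients. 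Standard periodic elliptic regularity for $-\Delta$ on $n\Gamma$, with the constants in the kernel absorbed into an $\|u\|_{L_n^2}$ remainder, then yields $\|u\|_{(H_n^{k+2})^3} \lesssim \|\Delta u\|_{(H_n^k)^3} + \|u\|_{(L_n^2)^3}$.

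Combining the two inequalities gives $\|u\|_{(H_n^{k+2})^3} \leq C_1 \bigl(\|F\|_{(H_n^k)^3} + \|u\|_{(H_n^k)^3} + \|u\|_{(L_n^2)^3}\bigr)$. The intermediate-order term is killed by Gagliardo--Nirenberg interpolation, $\|u\|_{H_n^k} \leq \eps \|u\|_{H_n^{k+2}} + C_\eps \|u\|_{L_n^2}$: choosing $\eps = 1/(2C_1)$ and absorbing the $H_n^{k+2}$ piece into the left-hand side leaves
\begin{equation*}
\|u\|_{(H_n^{k+2})^3} \leq C(k) \bigl(\|F\|_{(H_n^k)^3} + \|u\|_{(L_n^2)^3}\bigr).
\end{equation*}
Finally, Assumption~\ref{assump:stability} gives $\|u\|_{(L_n^2)^3} \leq M\|F\|_{(L_n^2)^3} \leq M\|F\|_{(H_n^k)^3}$, and (taking $M \geq 1$ without loss) the asserted bound $\|u\|_{(H_n^{k+2})^3} \leq C(k) M \|F\|_{(H_n^k)^3}$ follows.

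The main point to get right is the linear $M$-dependence: a more naive induction that commutes $\partial_i$ through $\mc{L}_{\per}$ and applies the inductive hypothesis both to $u$ itself and to $\partial_i u$ would pick up a factor of $M$ at every step, producing $C(k) M^{k+1}$ rather than $C(k) M$. Invoking Assumption~\ref{assump:stability} exactly once, at the $L_n^2$ endpoint of the interpolation scale, is what keeps the exponent of $M$ equal to one.
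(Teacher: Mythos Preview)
Your proof is correct and takes a genuinely different route from the paper's. The paper argues by induction on $k$: for $k=0$ it writes $A\mc{L}_{\per}^{-1} = I - (\text{zeroth-order piece})\mc{L}_{\per}^{-1}$ with $A$ the diagonal Laplacian, and for the inductive step it commutes $\bd{\nabla}$ through $\mc{L}_{\per}^{-1}$ via
\[
\bd{\nabla}\mc{L}_{\per}^{-1} = \mc{L}_{\per}^{-1}\bd{\nabla} - \mc{L}_{\per}^{-1}[\bd{\nabla},\mc{L}_{\per}]\mc{L}_{\per}^{-1},
\]
using that the commutator is a multiplication by $W_n^{k,\infty}$ functions. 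Your argument instead solves for $\Delta u$ directly, invokes periodic elliptic regularity once, and then interpolates the intermediate $H_n^k$ term against the $L_n^2$--$H_n^{k+2}$ endpoints before appealing to Assumption~\ref{assump:stability} a single time.

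What each approach buys: the paper's commutator induction is the ``standard'' bootstrap and makes clear that only finitely many derivatives of the coefficients are used at each level, but as you correctly note in your final paragraph, applying the inductive hypothesis to \emph{both} factors of $\mc{L}_{\per}^{-1}$ in the commutator term picks up an extra factor of $M$ at each step, so the argument as literally written in the paper yields $C(k)M^{k+1}$ rather than the stated $C(k)M$. Your route avoids this entirely: because the coefficient constant $C_0$, the elliptic-regularity constant, and the interpolation constant $C_\eps$ are all $n$- and $M$-independent (easily checked in Fourier series on $n\Gamma$, since the norms are averaged), Assumption~\ref{assump:stability} enters exactly once at the $L_n^2$ level, and the linear $M$-dependence is transparent. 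Your proof is in this sense sharper than the paper's on the point that the statement itself emphasizes.
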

\begin{proof}
  Let us consider $k = 0$ first. It suffices to prove the estimate
  \begin{equation}\label{eq:controlAL}
    \norm{ A \mc{L}_{\per}^{-1} u }_{(H_n^2)^3} 
    \lesssim M \norm{ u }_{(L_n^2)^3},
  \end{equation}
  for any $u = (\omega_+, \omega_-, W) \in (L_n^2)^3$, where $A$ is
  the operator
  \begin{equation*}
    A = 
    \begin{pmatrix}
      -\Delta \\
      & -\Delta \\
      & & \tfrac{1}{8\pi}\Delta
    \end{pmatrix}.
  \end{equation*}
  The left hand side of \eqref{eq:controlAL} equals to
  \begin{equation*}
    \begin{pmatrix}
      - \Delta \\
      & -\Delta \\
      & & \tfrac{1}{8\pi}\Delta 
    \end{pmatrix}
    \mc{L}_{\per}^{-1} u 
    = u - 
    \begin{pmatrix}
      F_+ & 0 & \nu_{+,\per}  \\
      0 & F_- & \nu_{-,\per} \\
      \nu_{+,\per} & \nu_{-,\per} & 0
    \end{pmatrix}
    \mc{L}_{\per}^{-1} u,
  \end{equation*}
  where 
  \begin{equation*}
    F_{\pm} = \tfrac{35}{9} \nu_{\pm, \per}^{4/3} - \tfrac{20}{9} \nu_{\pm,
      \per}^{2/3} + V_{\per}.
  \end{equation*}
  Therefore, 
  \begin{equation*}
    \begin{aligned}
      \norm{A \mc{L}_{\per}^{-1} u}_{(L_n^2)^3} & \lesssim
      \norm{u}_{(L_n^2)^3} + \max(\norm{F_{\pm}}_{L_n^{\infty}},
      \norm{\nu_{\pm, \per}}_{L_n^{\infty}})
      \norm{\mc{L}_{\per}^{-1}}_{\ms{L}((L_n^2)^3)}
      \norm{u}_{(L_n^2)^3} \\
      & \lesssim M \norm{u}_{(L_n^2)^3}.
    \end{aligned}
  \end{equation*}
  
  Suppose the statement of the Proposition is proved for $k \leq k_0$,
  let us consider $k = k_0 + 1$. Since 
  \begin{equation*}
    \begin{aligned}
      \bd{\nabla} \mc{L}_{\per}^{-1} & = \mc{L}_{\per}^{-1}
      \bd{\nabla} + [\bd{\nabla}, \mc{L}_{\per}^{-1}] \\
      & = \mc{L}_{\per}^{-1} \bd{\nabla} - \mc{L}_{\per}^{-1}
      [\bd{\nabla}, \mc{L}_{\per}] \mc{L}_{\per}^{-1},
    \end{aligned}
  \end{equation*}
  it suffices to control 
  \begin{equation*}
    \norm{ \mc{L}_{\per}^{-1} [\bd{\nabla}, \mc{L}_{\per}] 
      \mc{L}_{\per}^{-1} }_{\ms{L}((H_n^{k})^3, (H_n^{k+1})^3)},
  \end{equation*}
  where $\bd{\nabla} = I_3 \nabla$ with $I_3$ the $3 \times 3$
  identity matrix.
  Note that
  \begin{multline*}
    \norm{ \mc{L}_{\per}^{-1} [\bd{\nabla}, \mc{L}_{\per}]
      \mc{L}_{\per}^{-1} }_{\ms{L}((H_n^{k})^3, (H_n^{k+1})^3)}
    \leq \norm{ \mc{L}_{\per}^{-1} }_{\ms{L}((H_n^{k-1})^3, (H_n^{k+1})^3)} \\
    \times \norm{[\bd{\nabla}, \mc{L}_{\per}]}_{\ms{L}((H_n^{k+1})^3,
      (H_n^{k-1})^3)} \norm{\mc{L}_{\per}^{-1}
    }_{\ms{L}((H_n^{k-1})^3, (H_n^{k+1})^3)}.
  \end{multline*}
  By assumption, the Proposition holds for $k-1 = k_0$,
  hence it suffices to control the commutator $[\bd{\nabla}, \mc{L}_{\per}]$.
  An explicit calculation yields
  \begin{equation*}{}
    [\bd{\nabla}, \mc{L}_{\per}] = 
    \begin{pmatrix}
      \nabla F_+ & 0 & \nabla \nu_{+,\per}  \\
      0 & \nabla F_- & \nabla \nu_{-,\per} \\
      \nabla \nu_{+,\per} & \nabla \nu_{-,\per} & 0
    \end{pmatrix},
  \end{equation*}
  and hence the bounds follow from the regularity assumptions on
  $\nu_{\pm, \per}$ and $V_{\per}$. The proposition is proved.
\end{proof}

The Assumption~\ref{assump:stability} is stated in terms of the
operator $\mc{L}_{\per}$ acting on a series of spaces
$(L_n^2)^3$. Using the Bloch-Floquet decomposition (see
\cite{ReedSimon4} or \cite{ELu:ARMA} for an introduction), we may
obtain an equivalent characterization of the stability
assumption. Note that $\mc{L}_{\per}$ commutes with the translational
operator with respect to the lattice $\mathbb{L}$ since $\nu_{\pm,
  \per}$ and $V_{\per}$ are $\Gamma$-periodic. Denote $\Gamma^{\ast}$
as the unit cell of the reciprocal lattice of $\mathbb{L}$, the
Bloch-Floquet decomposition of $\mc{L}_{\per}$ is given by
\begin{equation}
  \mc{L}_{\per} = \barint_{\Gamma^{\ast}} \mc{L}_{\xi, \per} \ud \xi.
\end{equation}
Here for any $\xi \in \Gamma^{\ast}$, $\mc{L}_{\xi, \per}$ is the
operator 
\begin{equation}
  \mc{L}_{\xi, \per} = 
  \begin{pmatrix}
    \mc{L}_{+,\xi, \per} & 0 & \nu_{+,\per}  \\
    0 & \mc{L}_{-,\xi, \per} & \nu_{-,\per} \\
    \nu_{+,\per} & \nu_{-,\per} & \frac{1}{8\pi} \Delta_{\xi}
  \end{pmatrix},
\end{equation}
with $\mc{L}_{\pm, \xi, \per}$ given by 
\begin{equation}
  \mc{L}_{\pm, \per} \omega = - \Delta_{\xi} \omega + \frac{35}{9} \nu_{\pm,
    \per}^{4/3} \omega - \frac{20}{9} \nu_{\pm, \per}^{2/3} \omega 
  + V_{\per} \omega.
\end{equation}
We also have for any $\xi$, the operator $\mc{L}_{\xi, \per}$ defined
on the space $(L_{\xi}^2)^3$, where
\begin{equation*}
  L_{\xi}^2 = \{ f \in L_{\text{loc}}^2(\RR^3) \mid e^{i\xi\cdot x} f(x) 
  \ \Gamma\text{-periodic} \},
\end{equation*}
is self-adjoint \cite{ReedSimon4}. 

Using Bloch-Floquet decomposition of $\mc{L}_{\per}$, the stability
assumption  can also be formulated as
\begin{assumpA}
  There exists a constant $M$, such that for each $\xi \in \Gamma^{\ast}$,
  \begin{equation*}
    \norm{ \mc{L}_{\xi, \per}^{-1} }_{(L_{\xi}^2)^3} \leq M.
  \end{equation*}
\end{assumpA}

The proof of the equivalence of Assumption~\ref{assump:stability} and
Assumption~A$'$ is parallel to the corresponding results in
\cites{ELu:CPAM, ELu:KohnSham} and is standard from Bloch-Floquet
theory; hence, we omit it here.

\subsection{Example of stability and instability}

Let us consider the jellium model as an example to understand better
the stability assumptions. For the jellium model, the charge
background is a constant function $\rho_b(x) = \rho_0$. Define $\nu_0
= \tfrac{1}{2} \rho_0^{1/2}$, a solution to the Euler-Lagrange
equations are given by 
\begin{align*}
  \nu_{+}(x) = \nu_0, \quad \nu_{-}(x) = \nu_0, \quad V(x) = 0, \quad
  \lambda = \frac{5}{3}\nu_0^{4/3} - \frac{4}{3}\nu_0^{2/3}.
\end{align*}
Therefore, 
\begin{equation}
  \mc{L}_{+} \omega = \mc{L}_{-} \omega = 
  -\Delta \omega + \frac{20}{9} \nu_0^{4/3} \omega 
  - \frac{8}{9} \nu_0^{2/3} \omega.
\end{equation}

We use Fourier transform to analyze the operator $\mc{L}$.
\begin{equation}
  \mc{L}_{\xi} = 
  \begin{pmatrix}
    \xi^2 + \frac{20}{9}\nu_0^{4/3} - \frac{8}{9}\nu_0^{2/3} &
    0 & \nu_0 \\
    0 & \xi^2 + \frac{20}{9}\nu_0^{4/3} - \frac{8}{9}\nu_0^{2/3} & \nu_0 \\
    \nu_0 & \nu_0 & - \frac{1}{8\pi} \xi^2
  \end{pmatrix}
\end{equation}
One eigenvalue of the matrix is 
\begin{equation}
  \lambda_{\xi,1}  = \xi^2 + \frac{20}{9}\nu_0^{4/3} -
  \frac{8}{9}\nu_0^{2/3},
\end{equation}
corresponds to the eigenvector $(1, -1, 0)^{\TT}$. 
The other two
eigenvalues are given by
\begin{align}
  \lambda_{\xi,\pm} & = \frac{1}{2} \left( \frac{8\pi -
      1}{8\pi} \xi^2 +
    \frac{20}{9}\nu_0^{4/3} - \frac{8}{9}\nu_0^{2/3} \right. \\
  & \nn\hspace{8em} \left.  \pm \sqrt{\Bigl( \frac{8\pi + 1}{8\pi} \xi^2
      + \frac{20}{9}\nu_0^{4/3} - \frac{8}{9}\nu_0^{2/3} \Bigr)^2 + 8
      \nu_0^2} \right).
\end{align}
Observe that these two eigenvalues correspond to action of
$\mc{L}_{\xi}$ on the subspace orthogonal to the vector $(1, -1,
0)^{\TT}$, and hence, the first two components are the same.

It is easy to see that $\lambda_{\xi,1}$ becomes
positive if $\xi$ is sufficiently large. To prevent $\lambda_{\xi,1}$
from changing sign when $\xi$ is small, we need
\begin{equation*}
  \xi^2 + \frac{20}{9} \nu_0^{4/3} - \frac{8}{9} \nu_0^{2/3} > 0, \quad
  \forall \xi,
\end{equation*}
or equivalently
\begin{equation}\label{eq:condpos}
  \nu_0 > \bigl(\frac{2}{5}\bigr)^{3/2}.
\end{equation}
For the other two eigenvalues, we have
\begin{equation*}
  \lambda_{\xi,+} \lambda_{\xi,-} = 
  - \frac{1}{8\pi} \xi^2 \Bigl( \xi^2 + \frac{20}{9}\nu_0^{4/3} - \frac{8}{9} 
  \nu_0^{2/3} \Bigr) - \nu_0^2.
\end{equation*}
The product becomes negative when $\xi$ is sufficiently large, and hence
we have $\lambda_{\xi,-} < 0$ and $\lambda_{\xi,+} > 0$ for $\xi$
sufficiently large. To make sure they are nonzero for every $\xi$, we need
\begin{equation*}
  - \frac{1}{8\pi} \xi^2 \Bigl( \xi^2 + \frac{20}{9}\nu_0^{4/3} - \frac{8}{9} 
  \nu_0^{2/3} \Bigr) - \nu_0^2 < 0, \quad \forall \xi,
\end{equation*}
which is equivalent to the condition that 
\begin{equation}\label{eq:condpos2}
  \frac{20}{9}\nu_0^{4/3} - \frac{8}{9}\nu_0^{2/3} > - 8\sqrt{\pi} \nu_0. 
\end{equation}
This condition is weaker than \eqref{eq:condpos}. 

Hence, \eqref{eq:condpos} guarantees that the three eigenvalues do not
change sign for all $\xi$, and hence the matrix $\mc{L}_{\xi}$ is
non-singular. 

Let us remark that physically, the condition \eqref{eq:condpos}
corresponds to the stability of spin-density-wave, since
$\lambda_{k,1}$ corresponds to the eigenvector $(1, -1, 0)^{\TT}$,
which increases (or decreases) the spin-up component, while decreases
(or increases) the spin-down component, hence creates a
spin-density-wave. On the other hand, the condition
\eqref{eq:condpos2} corresponds to the stability of
charge-density-wave, since the spin-up and spin-down components change
together with the same amplitude. For the jellium case, since the
condition \eqref{eq:condpos2} is implied by the condition
\eqref{eq:condpos}, we observe that the spin-density-wave loses
stability earlier than the charge-density-wave when the uniform
background charge density is decreased.

\section{Cauchy-Born rule}

Let us first consider the case when the applied magnetic field is
constant $h(x) \equiv h$, and consider the cell problem
\begin{multline}
  I_{\Gamma}^h(\nu_+, \nu_-) = \int_{\Gamma} \nu_+^{10/3} +
  \nu_-^{10/3} + \abs{\nabla \nu_{+}}^2 + \abs{\nabla \nu_{-}}^2
  - \int_{\Gamma} \nu_+^{8/3} + \nu_-^{8/3} \\
  + \frac{1}{2} D_{\Gamma}(\rho - \rho_b, \rho - \rho_b) - h m_{\tot},
\end{multline}
where the periodic Coulomb interaction $D_{\Gamma}$ is given by 
\begin{equation*}
  D_{\Gamma}(f, g) = \langle f, g\rangle_{\dot{H}^{-1}(\Gamma)},
\end{equation*}
and the total magnetization $m_{\tot}$ is given by
\begin{equation*}
  m_{\tot} = \int_{\Gamma} \nu_{+}^2 - \nu_{-}^2.
\end{equation*}

Consider
\begin{equation}
  E(h) = \inf_{\nu_+, \nu_-} I^h_{\Gamma},
\end{equation}
with the normalization constraint $\int_{\Gamma} \rho = Z$. 
The Euler-Lagrange equations are given by
\begin{align}
  & - \Delta \nu_+ + \tfrac{5}{3} \nu_+^{7/3} - \tfrac{4}{3}
  \nu_+^{5/3} + (V - h) \nu_+ = 0; \\
  & - \Delta \nu_- + \tfrac{5}{3} \nu_-^{7/3} - \tfrac{4}{3}
  \nu_-^{5/3} + (V + h) \nu_- = 0; \\
  & - \Delta V = 4\pi (\rho - \rho_b);
\end{align}
in the unit cell $\Gamma$, where $h$ is a constant. Let us denote the
equations as 
\begin{equation*}
  \mc{F}(u, h) = 0,
\end{equation*}
where $u$ stands for the triple $u = (\nu_+, \nu_-, V)$. 

The following theorem shows that the solution $u$ exists provided the
stability condition is satisfied and the constant applied magnetic
field $h$ is not too large.
\begin{theorem}\label{thm:CB}
  Under Assumption~\ref{assump:stability}, there exist positive
  constants $h_0$ and $\delta$, and a unique $C^{\infty}$ map from
  $[-h_0, h_0] \to (H_1^3)^3$: $h \mapsto u(\cdot; h)$, such that
  $\norm{u(\cdot; h) - u_{\per}}_{(H_1^3)^3} \leq \delta$ and
  \begin{equation*}
    \mc{F}(u(\cdot; h), h) = 0.
  \end{equation*}
\end{theorem}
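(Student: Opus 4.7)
The plan is to apply the Banach-space implicit function theorem at the reference point $(u_{\per}, 0)$, with $u_{\per} = (\nu_{+,\per}, \nu_{-,\per}, V_{\per})$. I view $\mc{F}$ as a map from a neighborhood $\mc{U} \subset (H_1^3)^3$ of $u_{\per}$ times $\RR$ into $(H_1^1)^3$: every component of $\mc{F}$ is second-order, matching the two-derivative loss. The Sobolev embedding $H^3(\Gamma) \hookrightarrow C^1(\Gamma)$ together with the positivity bound \eqref{eq:positive} lets me shrink $\mc{U}$ so that $\nu_\pm \geq C_\nu/2$ uniformly on $\Gamma$ for every $u \in \mc{U}$; this ensures the fractional-power nonlinearities $\nu_\pm^{7/3}$ and $\nu_\pm^{5/3}$ are well defined and stay in a range where they are real-analytic.

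First I verify that $\mc{F}$ is $C^\infty$ on $\mc{U} \times \RR$. The multiplicative terms ($V\nu_\pm$, $h\nu_\pm$, the $\nu_\pm^2$ inside $\rho$) are smooth into $H_1^3$ by the Banach-algebra property of $H^3$ in three dimensions, and the Laplacian is a bounded linear map $H_1^3 \to H_1^1$. For $\nu_\pm^{7/3}$ and $\nu_\pm^{5/3}$ I use a standard Nemytskii-type argument: since $\nu_\pm$ stays strictly bounded below on $\mc{U}$ and $t \mapsto t^{p/3}$ is real-analytic on $[C_\nu/2, \infty)$, the induced superposition operators are $C^\infty$ (in fact real-analytic) from $\mc{U}$ into $H_1^3 \subset H_1^1$. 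The dependence on $h$ is linear, hence trivially smooth.

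Second, and central to the argument, I identify the Fr\'echet derivative $D_u\mc{F}(u_{\per}, 0)$ with the periodic linearization $\mc{L}_{\per}$ on the unit cell ($n=1$), up to the Poisson-equation rescaling used in defining $\mc{L}$. Assumption~\ref{assump:stability} with $n=1$ yields $\mc{L}_{\per}^{-1} \in \ms{L}((L_1^2)^3)$, and Proposition~\ref{prop:regularityIL} with $k=1$ (whose hypothesis holds because $\nu_{\pm,\per}, V_{\per} \in C^\infty(\Gamma)$) upgrades this to a bounded isomorphism $\mc{L}_{\per}^{-1} : (H_1^1)^3 \to (H_1^3)^3$. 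The implicit function theorem then delivers constants $h_0, \delta > 0$ and a unique $C^\infty$ map $h \mapsto u(\cdot; h)$ on $[-h_0, h_0]$ with $\norm{u(\cdot;h) - u_{\per}}_{(H_1^3)^3} \leq \delta$ and $\mc{F}(u(\cdot;h), h) = 0$; $C^\infty$ regularity in $h$ is inherited from the smoothness of $\mc{F}$.

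The one item that deserves a remark is the normalization $\int_\Gamma \rho = Z$. Since $V$ is $\Gamma$-periodic, $\int_\Gamma \Delta V = 0$, so integrating the third component of the equation $\mc{F}(u, h) = 0$ over $\Gamma$ automatically forces $\int_\Gamma(\rho - \rho_b) = 0$; no separate scalar equation needs to be appended, and the additive constant in $V$ plays the role of the Lagrange multiplier $-\lambda$ absorbed after \eqref{eq:coulombV}. Given that Assumption~\ref{assump:stability} and Proposition~\ref{prop:regularityIL} already supply the hard analytic input, the main remaining technical effort is the careful verification that the superposition operators for the $7/3$ and $5/3$ powers are $C^\infty$ between the chosen Sobolev spaces; this is the only step that is not essentially immediate, and I expect it to be the primary obstacle to a clean writeup.
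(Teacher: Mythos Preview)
Your proposal is correct and follows essentially the same route as the paper's own proof: apply the implicit function theorem at $(u_{\per},0)$, restrict to a neighborhood where $\nu_\pm \geq C_\nu/2$ via Sobolev embedding, verify that $\mc{F}:\mc{U}\times\RR\to(H_1^1)^3$ is $C^\infty$, identify $D_u\mc{F}(u_{\per},0)$ with $\mc{L}_{\per}$, and invoke Assumption~\ref{assump:stability} together with Proposition~\ref{prop:regularityIL} (case $k=1$) to invert it from $(H_1^1)^3$ to $(H_1^3)^3$. Your additional remarks on the Nemytskii smoothness and on the normalization constraint being enforced automatically by the solvability of the Poisson component are accurate elaborations of points the paper leaves implicit.
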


\begin{proof}
  We use the implicit function theorem. Let $u_{\per}$ be the triple
  $(\nu_{\pm, \per}, V_{\per})$, we have 
  \begin{equation*}
    \mc{F}(u_{\per}, 0) = 0.
  \end{equation*}
  Let $\delta_1$ be a positive constant to be fixed, consider the
  neighborhood around $u_{\per}$:
  \begin{equation*}
    \mc{D} = \{ u \mid \norm{ u - u_{\per} }_{(H_1^3)^3} \leq \delta_1 \}. 
  \end{equation*}
  Denote $u = (\nu_{\pm}, V)$; by Sobolev inequality, we have
  \begin{equation*}
    \norm{\nu_{\pm} - \nu_{\pm, \per}}_{L_1^{\infty}} \lesssim 
    \norm{\nu_{\pm} - \nu_{\pm, \per}}_{H_1^2}
    \leq \delta_1.
  \end{equation*}
  Take $\delta_1$ sufficiently small such that $\norm{\nu_{\pm} - \nu_{\pm,
      \per}}_{L_1^{\infty}} \leq C_{\nu} / 2$. Hence, for $u \in
  \mc{D}$, we have $\nu_{\pm} \geq C_{\nu}/2 > 0$. It is then easy to
  see that viewed as an operator from $ \mc{D} \times \RR^3 \to
  (H_1^1)^3$, $\mc{F}$ is $C^{\infty}$. Notice that
  \begin{equation*}
    \frac{\delta \mc{F}(u, h)}{\delta u} \Big\vert_{u = u_{\per}, h = 0}
    = \mc{L}_{\per} 
  \end{equation*}
  has a bounded inverse from $(H_1^1)^3$ to $(H_1^3)^3$ due to the
  stability assumption and
  Proposition~\ref{prop:regularityIL}. Applying the implicit function
  theorem on $\mc{F}$, we arrive at the desired result.

\end{proof}

\begin{remark}
  It is clear from the proof that we can consider solutions in spaces
  with higher regularity. The space $(H_1^3)^3$ is chosen here for
  proving the main result in the next section.
\end{remark}

We denote the solutions given by Theorem~\ref{thm:CB} as $\nu_{+,
  \CB}(\cdot; h)$ and $\nu_{-, \CB}(\cdot; h)$ respectively for the
spin-up and spin-down components, and $V_{\CB}(\cdot; h)$ for the
potential. Here $h$ is a parameter and $\nu_{\pm, \CB}$ and $V_{\CB}$
are $\Gamma$-periodic.

For $h \in [-h_0, h_0]$, let $\mc{L}_h$ be the linearized
operator around $u_{\CB}(\cdot, h)$:
\begin{equation}
  \mc{L}_h = \frac{\delta \mc{F}(u, h)}{\delta u}
  \Big\vert_{u = u_{\CB}(\cdot; h), h = h},
\end{equation}
given by 
\begin{equation}
  \mc{L}_h   
  \begin{pmatrix}
    \omega_+ \\
    \omega_- \\
    W
  \end{pmatrix}
  = 
  \begin{pmatrix}
    \mc{L}_{h, +} & 0 & \nu_{+, \CB}(\cdot; h)  \\
    0 & \mc{L}_{h, -} & \nu_{-, \CB}(\cdot; h)  \\
    \nu_{+, \CB}(\cdot; h) & \nu_{-, \CB}(\cdot; h) & \frac{1}{8\pi}
    \Delta
  \end{pmatrix}
  \begin{pmatrix}
    \omega_+ \\
    \omega_- \\
    W 
  \end{pmatrix},
\end{equation}
where the operators $\mc{L}_{h, +}$ and $\mc{L}_{h, -}$ are defined as
\begin{gather}
  \mc{L}_{h, +} \omega = - \Delta \omega + \frac{35}{9} \nu_{+,
    \CB}^{4/3}(x; h) \omega - \frac{20}{9} \nu_{+, \CB}^{2/3}(x; h)
  \omega + (V_{\CB}(x; h) - h)\omega; \\
  \mc{L}_{h, -} \omega = - \Delta \omega + \frac{35}{9} \nu_{-,
    \CB}^{4/3}(x; h) \omega - \frac{20}{9} \nu_{-, \CB}^{2/3}(x; h)
  \omega + (V_{\CB}(x; h) + h)\omega.
\end{gather}
By Proposition~\ref{prop:boundL} and using similar arguments as the
proof of Proposition~\ref{prop:regularityIL}, we have
\begin{prop}\label{prop:LH}
  Under the same assumptions of Theorem~\ref{thm:CB}, $\mc{L}_h$ as an
  operator from $(H_1^3)^3$ to $(H_1^1)^3$ is invertible, and the norm of
  the inverse operator is uniformly bounded for $h \in [-h_0, h_0]$. 
\end{prop}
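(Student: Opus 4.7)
The plan is to view $\mc{L}_h$ as a small perturbation of $\mc{L}_{\per}$ and invert by a Neumann series argument. The invertibility of $\mc{L}_{\per}$ as a map from $(H_1^1)^3$ to $(H_1^3)^3$ is Proposition~\ref{prop:regularityIL} applied with $k=1$ (note that $\nu_{\pm,\per}$ and $V_{\per}$ are smooth $\Gamma$-periodic functions, so they lie in every $W_1^{k,\infty}$). Thus $\mc{L}_{\per}^{-1}$ exists as a bounded operator $(H_1^1)^3 \to (H_1^3)^3$ with a norm bound depending only on $M$.

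Next I would write $\mc{L}_h = \mc{L}_{\per} + R_h$, where the remainder $R_h$ is the diagonal-plus-offdiagonal multiplication operator whose entries are $\tfrac{35}{9}\bigl(\nu_{\pm,\CB}^{4/3}(\cdot;h)-\nu_{\pm,\per}^{4/3}\bigr) - \tfrac{20}{9}\bigl(\nu_{\pm,\CB}^{2/3}(\cdot;h)-\nu_{\pm,\per}^{2/3}\bigr) + \bigl(V_{\CB}(\cdot;h)-V_{\per}\bigr)\mp h$ on the diagonal and $\nu_{\pm,\CB}(\cdot;h)-\nu_{\pm,\per}$ off the diagonal. By Theorem~\ref{thm:CB}, the map $h \mapsto u(\cdot;h)$ is $C^\infty$ from $[-h_0,h_0]$ into $(H_1^3)^3$, and $u(\cdot;0)=u_{\per}$; combining this with the Sobolev embedding $H_1^3 \hookrightarrow W_1^{1,\infty}$ and the smoothness of $t\mapsto t^{4/3}$, $t\mapsto t^{2/3}$ on $[C_\nu/2,\infty)$ (using the uniform lower bound $\nu_{\pm,\CB}\ge C_\nu/2$ from the proof of Theorem~\ref{thm:CB}), the coefficients of $R_h$ have $W_1^{1,\infty}$ norms that tend to $0$ as $h\to 0$.

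Shrinking $h_0$ if necessary, I would then estimate $R_h$ as an operator $(H_1^3)^3 \to (H_1^1)^3$ by the same kind of product-rule bound used in Proposition~\ref{prop:boundL}, obtaining
\begin{equation*}
    \norm{R_h}_{\ms{L}((H_1^3)^3,(H_1^1)^3)} \lesssim \sup_{\pm}\bigl(\norm{\nu_{\pm,\CB}(\cdot;h)-\nu_{\pm,\per}}_{W_1^{1,\infty}} + \norm{V_{\CB}(\cdot;h)-V_{\per}}_{W_1^{1,\infty}}\bigr) + \abs{h},
\end{equation*}
which by the previous paragraph can be made smaller than $1/(2\norm{\mc{L}_{\per}^{-1}}_{\ms{L}((H_1^1)^3,(H_1^3)^3)})$ uniformly for $\abs{h}\le h_0$. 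Then $\mc{L}_h = \mc{L}_{\per}(I+\mc{L}_{\per}^{-1}R_h)$ with $\norm{\mc{L}_{\per}^{-1}R_h}_{\ms{L}((H_1^3)^3)}\le 1/2$, so the Neumann series defines a bounded inverse and yields the uniform bound.

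The only step that requires any care is controlling $R_h$ in the correct operator norm while only losing one derivative: the off-diagonal entries $\nu_{\pm,\CB}-\nu_{\pm,\per}$ multiply the $W$-component and need to be bounded in $W_1^{1,\infty}$ rather than just $L_1^\infty$ to map $H_1^3$ into $H_1^1$, and the nonlinear terms $\nu^{4/3},\nu^{2/3}$ must be differenced using the uniform positivity of $\nu_{\pm,\CB}$. Everything else is routine continuity of the implicit function theorem solution and the Neumann series.
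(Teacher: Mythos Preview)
Your proposal is correct and matches what the paper has in mind: the paper offers only a one-line hint citing Propositions~\ref{prop:boundL} and~\ref{prop:regularityIL}, and the natural way to unpack that hint is precisely a perturbation argument off $\mc{L}_{\per}$ using the $C^{\infty}$ dependence of $u_{\CB}(\cdot;h)$ on $h$, which is what your Neumann series does. Your attention to the $W_1^{1,\infty}$ control of the coefficients and to the uniform lower bound on $\nu_{\pm,\CB}$ fills in the only nontrivial details, and shrinking $h_0$ is permissible since $h_0$ is existential in Theorem~\ref{thm:CB}.
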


Let us remark that Theorem~\ref{thm:CB} gives a map from $h$ to the
electronic structure ($\nu_+, \nu_-$) for the case when the external
magnetic field is homogeneous. This is slightly different from the
usual Cauchy-Born rule for crystals under deformation, where the
strain is fixed -- here $h$, the analog of stress, is fixed. One may
consider the dual problem given by
\begin{multline*}
  \wt{E}_{\CB}(m) = \inf_{\nu_{\pm} : m_{\tot} = m} \biggl\{
  \barint_{\Gamma} \nu_+^{10/3} + \nu_-^{10/3} + \abs{\nabla
    \nu_{+}}^2 + \abs{\nabla \nu_{-}}^2
  - \barint_{\Gamma} \nu_+^{8/3} + \nu_-^{8/3} \\
  + \frac{1}{2\abs{\Gamma}} D_{\Gamma}(\rho - \rho_b, \rho - \rho_b)
  \biggr\},
\end{multline*}
where $m_{\tot} = \int_{\Gamma} \nu_{+}^2 - \nu_{-}^2$ is constrained
to be equal to $m$. This can be viewed as a Legendre transform of
\begin{multline*}
  E_{\CB}(h) = \inf_{\nu_{\pm}} \barint_{\Gamma} \nu_+^{10/3} +
  \nu_-^{10/3} + \abs{\nabla \nu_{+}}^2 + \abs{\nabla \nu_{-}}^2
  - \barint_{\Gamma} \nu_+^{8/3} + \nu_-^{8/3} \\
  + \frac{1}{2\abs{\Gamma}} D_{\Gamma}(\rho - \rho_b, \rho - \rho_b) -
  \frac{1}{\abs{\Gamma}}h m_{\tot}.
\end{multline*}
The formulation in terms 
of the magnetization $m$ may bear more similarity with
the conventional Cauchy-Born rule for lattices. 

\section{Main results}

We turn to the situation when the system is under (a macroscopically
heterogeneous) external applied magnetic field. We will study the case
when the applied potential is macroscopically smooth. The ratio of the
lattice constant and the characteristic length of $h$ will serve as a
small parameter $\veps$. Given a fixed $\Gamma$-periodic function
$h(\cdot)$, two equivalent choices of scalings are possible: For any
$\veps = 1/n$ a reciprocal of positive integer, we may study a perfect
crystal with applied field $h(\veps x)$ in $n\Gamma$ with periodic
boundary condition; equivalently, we may rescale the system, so that
the lattice constant becomes $\veps$ and study the rescaled system
with applied field $h(x)$ in $\Gamma$ with periodic boundary
condition. We call the former choice the atomic unit scaling, and the
latter choice the $\veps$-scaling. We will use atomic unit scaling for
most part of the paper, however, $\veps$-scaling is more convenient
and is used for the two-scale analysis in Section~\ref{sec:twoscale}.

Under the influence of the external field,
the electronic structure of the system is determined by minimizing the
energy functional. 
\begin{multline}\label{eq:InH}
  I_n^h(\nu_{+}, \nu_{-}) = \int_{n\Gamma} \nu_+^{10/3} + \nu_-^{10/3}
  + \abs{\nabla \nu_+}^2 + \abs{\nabla \nu_-}^2
  - \int_{n\Gamma} \nu_+^{8/3} + \nu_-^{8/3} \\
  + \frac{1}{2} D_n(\rho - \rho_b, \rho - \rho_b) - \int_{n\Gamma} h(\veps x) m,
\end{multline}
where $D_n$ is the Coulomb interaction
\begin{equation*}
  D_{n}(f, g) = \langle f, g\rangle_{\dot{H}^{-1}(n \Gamma)},
\end{equation*}
and the density $\rho$ and the spin density $m$ are given by
\begin{equation*}
  \rho = \nu_+^2 + \nu_-^2, \quad m = \nu_+^2 - \nu_-^2.
\end{equation*}
The functional \eqref{eq:InH} is optimized under the normalization
constraint on the electron density 
\begin{equation}\label{eq:norm}
  n^{-3} \int_{n\Gamma} \rho = n^{-3} \int_{n\Gamma} \nu_+^2 + \nu_-^2 = Z.
\end{equation}

The Euler-Lagrange equations associated with \eqref{eq:InH} are given
by
\begin{align}
  \label{eq:nuplus} & - \Delta \nu_+ + \frac{5}{3} \nu_+^{7/3} -
  \frac{4}{3} \nu_+^{5/3} + (V - h(\veps x))\nu_+ = 0; \\
  \label{eq:numinus} & - \Delta \nu_- + \frac{5}{3} \nu_-^{7/3} -
  \frac{4}{3} \nu_-^{5/3} + (V + h(\veps x))\nu_- = 0; \\
  \label{eq:V} & - \Delta V = 4\pi \bigl(\nu_+^2 + \nu_-^2 - \rho_{b}
  \bigr),
\end{align}
in $n\Gamma$ with periodic boundary condition. Note that the
normalization constraint \eqref{eq:norm} is contained in \eqref{eq:V}
as the solvability condition. The potential $V$ is determined up to a
constant by \eqref{eq:norm}, the constant is fixed by
\eqref{eq:nuplus} and \eqref{eq:numinus} through the solvability
condition of \eqref{eq:norm}.

For later use, let us also write down the Euler-Lagrange equations in
$\veps$-scaling, which is just a rescaling of
\eqref{eq:nuplus}--\eqref{eq:V}. 
\begin{align}
  & - \veps^2 \Delta \nu_+^{\veps} + \frac{5}{3} (\nu_+^{\veps})^{7/3}
  - \frac{4}{3} (\nu_+^{\veps})^{5/3} +
  (V^{\veps} - h)\nu_+^{\veps} = 0; \\
  & - \veps^2 \Delta \nu_-^{\veps} + \frac{5}{3} (\nu_-^{\veps})^{7/3}
  - \frac{4}{3} (\nu_-^{\veps})^{5/3} +
  (V^{\veps} + h)\nu_-^{\veps} = 0; \\
  & - \veps^2 \Delta V^{\veps} = 4\pi \bigl((\nu_+^{\veps})^2 +
  (\nu_-^{\veps})^2 - \rho_{b}^{\veps} \bigr),
\end{align}
in $\Gamma$ with periodic boundary condition. We have the scaling relations
\begin{equation*}
  \nu_{\pm}^{\veps}(x) = \nu_{\pm}(x/\veps), \quad
  V^{\veps}(x) = V(x/\veps), \quad   \rho_b^{\veps}(x) =
  \rho_b(x/\veps).
\end{equation*}

In analogy with the spirit of the Cauchy-Born rule for crystal
lattices, we expect that the electronic structure around a point $x_0$
to be approximately given by the electronic structure of a crystal
under constant applied potential with amplitude $h(\veps x_0)$. As we
have discussed in the last section, the electronic structure for the
system with the constant applied potential is given by $\nu_{\pm,
  \CB}(\cdot; h(\veps x_0))$ and $V_{\CB}(\cdot; h(\veps
x_0))$. Therefore, the electronic structure constructed using the spirit
of the Cauchy-Born rule is
\begin{equation}\label{eq:CauchyBorn}
  \nu_{\pm}(x) = \nu_{\pm, \CB}(x; h(\veps x)), \quad
  V(x) = V_{\CB}(x; h(\veps x)). 
\end{equation}

One main result of this paper is that under the stability conditions,
the electronic structure constructed by the Cauchy-Born rule gives a
good approximation to a solution to the TFDW equation. In other words,
one can find a solution to the TFDW equation that is close to the
Cauchy-Born approximation.
\begin{theorem}\label{thm:main}
  Under Assumption~\ref{assump:stability}, there exist positive
  constants $h_0$, $\veps_0$ and $\delta$, such that for any $h \in
  C^{\infty}(\Gamma)$, $\norm{h}_{L^{\infty}(\Gamma)} \leq h_0$ and
  $\veps \leq \veps_0$, there exists a unique $u = (\nu_+, \nu_-, V)
  \in (H_n^2)^3$, with the properties
  \begin{itemize}
  \item $u$ is a solution to the Euler-Lagrange equation,
    \begin{equation*}
      \mc{F}(u) = 0;
    \end{equation*}
  \item $u$ is close to the approximation given by the Cauchy-Born
    rule
    \begin{equation*}
      \norm{u - u_{\CB}(x; h(\veps x))}_{(H_n^2)^3} \leq \delta \veps. 
    \end{equation*}
  \end{itemize}
\end{theorem}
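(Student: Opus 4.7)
The plan is to apply a Banach fixed-point argument centered on the Cauchy-Born approximation $u_{\CB}^{\veps}(x) := u_{\CB}(x; h(\veps x))$, which is $n\Gamma$-periodic with $\veps = 1/n$. Writing $u = u_{\CB}^{\veps} + \eta$ and expanding $\mc{F}$, the equation $\mc{F}(u) = 0$ is equivalent to the fixed-point problem
\begin{equation*}
  \eta = T(\eta) := -\wt{\mc{L}}^{-1} \mc{F}(u_{\CB}^{\veps}) - \wt{\mc{L}}^{-1} \mc{R}(\eta),
\end{equation*}
where $\wt{\mc{L}}$ is the linearization of $\mc{F}$ at $u_{\CB}^{\veps}$ and $\mc{R}(\eta)$ collects the quadratic and higher-order remainder. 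The argument reduces to three ingredients: a residual bound $\norm{\mc{F}(u_{\CB}^{\veps})}_{(L_n^2)^3} \lesssim \veps$; a uniform-in-$(n,\veps)$ bound $\norm{\wt{\mc{L}}^{-1}}_{\ms{L}((L_n^2)^3, (H_n^2)^3)} \lesssim 1$; and a quadratic estimate $\norm{\mc{R}(\eta)}_{(L_n^2)^3} \lesssim \norm{\eta}_{(H_n^2)^3}^2$. Granted these, $T$ maps the ball $\{\eta : \norm{\eta}_{(H_n^2)^3} \leq \delta \veps\}$ into itself and is a strict contraction there, provided $\delta$ is chosen sufficiently large and $\veps_0$ sufficiently small; the Banach fixed-point theorem then yields existence and uniqueness.

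For the residual, the key observation is that for each constant value $h_0$ the triple $u_{\CB}(\cdot; h_0)$ exactly satisfies the constant-field Euler-Lagrange equation (Theorem~\ref{thm:CB}); when $h_0$ is replaced pointwise by $h(\veps x)$, all algebraic parts still cancel and only the chain-rule mismatch in $-\Delta$ survives. A direct computation yields
\begin{equation*}
  \Delta u_{\CB}^{\veps}(x) = (\Delta_x u_{\CB})(x; h(\veps x))
  + 2\veps\, \nabla_x(\partial_h u_{\CB})(x; h(\veps x)) \cdot (\nabla h)(\veps x)
  + \veps^2 (\cdots),
\end{equation*}
so that $\mc{F}(u_{\CB}^{\veps}) = O(\veps)$. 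The prefactor is controlled by the $C^{\infty}$ dependence of $u_{\CB}$ on $h$ into $(H_1^3)^3$ supplied by Theorem~\ref{thm:CB}, the Sobolev embedding $H_1^3 \hookrightarrow W^{1,\infty}_1$, and the smoothness of $h$, giving a bound uniform in $n$.

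For the invertibility of $\wt{\mc{L}}$, Theorem~\ref{thm:CB} implies $u_{\CB}(\cdot; h) \to u_{\per}$ in $(H_1^3)^3$ as $h \to 0$, so for $\norm{h}_{L^{\infty}} \leq h_0$ small the coefficients $\nu_{\pm,\CB}^{\veps}$, $V_{\CB}^{\veps}$, and $\pm h(\veps \cdot)$ appearing in $\wt{\mc{L}}$ differ from the corresponding coefficients of $\mc{L}_{\per}$ (for which $h \equiv 0$) by $O(h_0)$ in $W_n^{1,\infty}$. Hence $\wt{\mc{L}} - \mc{L}_{\per}$ has operator norm $\lesssim h_0$ from $(H_n^2)^3$ to $(L_n^2)^3$, and combined with Assumption~\ref{assump:stability} and Proposition~\ref{prop:regularityIL} (which give $\norm{\mc{L}_{\per}^{-1}}_{\ms{L}((L_n^2)^3, (H_n^2)^3)} \lesssim M$ uniformly in $n$), a Neumann-series perturbation produces the desired uniform bound on $\wt{\mc{L}}^{-1}$ once $h_0$ is small. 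The estimate on $\mc{R}$ is then routine: the positivity bound $\nu_{\pm,\CB}^{\veps} \geq C_{\nu}/2$ (from \eqref{eq:positive} and continuity in $h$) makes $t \mapsto t^{7/3}, t^{5/3}$ smooth on the relevant range, and the embedding $H_n^2 \hookrightarrow L_n^{\infty}$ (valid in three dimensions) turns pointwise quadratic control into the stated $L^2_n$ bound.

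The main obstacle is not any single ingredient but ensuring that every estimate is uniform in $n$ as $\veps = 1/n \to 0$, despite the intrinsic two-scale nature of $u_{\CB}^{\veps}$: the microscopic $\Gamma$-periodic dependence inherited from $u_{\CB}(\cdot; h_0)$ coexists with the slowly varying modulation from $h(\veps \cdot)$. This is handled by systematic use of the $L_n^p$ and $H_n^k$ scalings built into the periodic norms, under which any smooth function of the form $G(x, \veps x)$ with $G$ bounded and $\Gamma$-periodic in its first argument has $L_n^2$-norm independent of $n$, so that factors of $\veps$ arising from the chain rule translate directly into small factors in the periodic norms. Once this scaling bookkeeping is in place, the three ingredients combine through the implicit function theorem to yield a unique $\eta$ with $\norm{\eta}_{(H_n^2)^3} \leq \delta \veps$, completing the proof.
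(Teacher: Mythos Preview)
Your overall strategy is sound and essentially parallels the paper's, but there is a genuine gap in the quadratic remainder estimate that makes the fixed-point argument break down as stated. The embedding $H_n^2 \hookrightarrow L_n^{\infty}$ does hold in three dimensions, but its constant is \emph{not} uniform in $n$: since $\norm{f}_{H^2(n\Gamma)} = n^{3/2}\norm{f}_{H_n^2}$, the standard Sobolev inequality gives only
\[
  \norm{\eta}_{L^{\infty}(n\Gamma)} \lesssim \norm{\eta}_{H^2(n\Gamma)} = \veps^{-3/2}\norm{\eta}_{H_n^2},
\]
and hence $\norm{\mc{R}(\eta)}_{(L_n^2)^3} \lesssim \veps^{-3/2}\norm{\eta}_{(H_n^2)^3}^2$, not $\norm{\eta}_{(H_n^2)^3}^2$. (Your two-scale heuristic for uniform $L_n^2$ bounds applies to functions of the special form $G(x,\veps x)$, but $\eta$ is an arbitrary element of $(H_n^2)^3$ and has no such structure.) Plugging this into the contraction step, with a ball of radius $\delta\veps$ the Lipschitz constant of $T$ is of order $\veps^{-3/2}\cdot\delta\veps = \delta\veps^{-1/2}$, which blows up as $\veps\to 0$; the map is not a contraction and $T$ need not preserve the ball.

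The paper's remedy is precisely to upgrade the approximate solution before running the iteration: using a two-scale expansion it constructs correctors $\nu_{\pm,1},\nu_{\pm,2},V_1,V_2$ so that the improved ansatz $u^0 = u_{\CB}^{\veps} + \veps(\cdots) + \veps^2(\cdots)$ satisfies $\norm{\mc{F}(u^0)}_{(L_n^2)^3}\lesssim\veps^3$. The Newton iteration is then run in a ball of radius $C_1\veps^3$ around $u^0$, and the same non-uniform Sobolev constant now yields a contraction factor $\veps^{-3/2}\cdot\veps^3=\veps^{3/2}\ll 1$. The conclusion $\norm{u-u_{\CB}^{\veps}}_{(H_n^2)^3}\lesssim\veps$ of Theorem~\ref{thm:main} then follows from $\norm{u-u^0}\lesssim\veps^3$ together with $\norm{u^0-u_{\CB}^{\veps}}\lesssim\veps$. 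In short, your residual bound of order $\veps$ is one order too weak to close the loop; at least one corrector (giving residual $o(\veps^{3/2})$) is needed, and the paper carries the expansion to second order.
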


We notice that once the solution to the Euler-Lagrange equation is
determined as $u = (\nu_{\pm}, V)$, the associated energy can be
written as 
\begin{multline*}
  I_n^h(\nu_{+}, \nu_{-}) = \int_{n\Gamma} \nu_+^{10/3} + \nu_-^{10/3}
  + \abs{\nabla \nu_+}^2 + \abs{\nabla \nu_-}^2
  - \int_{n\Gamma} \nu_+^{8/3} + \nu_-^{8/3} \\
  + \int_{n\Gamma} V (\rho - \rho_b) - \int_{n\Gamma} h(\veps x) m.
\end{multline*}
As a consequence of Theorem~\ref{thm:main}, the energy is well
approximated  by the Cauchy-Born rule, given by 
\begin{equation}
  I_{n, \CB}^h = \int_{n\Gamma} E_{\CB}( h(\veps x) ) \ud x
  = \frac{1}{\veps^3} \int_{\Gamma} E_{\CB}(h(x)) \ud x, 
\end{equation}
with $E_{\CB}$ defined by
\begin{multline}
  E_{\CB}(h) = \barint_{\Gamma} \nu_{+, \CB}^{10/3} + \nu_{-,
    \CB}^{10/3} + \abs{\nabla \nu_{+, \CB}}^2 + \abs{\nabla \nu_{-,
      \CB}}^2 - \barint_{n\Gamma} \nu_{+, \CB}^{8/3} + \nu_{-, \CB}^{8/3} \\
  + \barint_{\Gamma} V_{\CB} (\rho_{\CB} - \rho_b) - h
  \barint_{\Gamma} m_{\CB},
\end{multline}
where $\nu_{\pm, \CB} = \nu_{\pm, \CB}(\cdot; h)$ and similarly for
other terms. Note that, at least formally, we have
\begin{equation}\label{eq:energyfunctional}
  \int_{\Gamma} E_{\CB}(h(x)) \ud x = \inf_{m} \biggl( 
  \int_{\Gamma} \wt{E}_{\CB}(m(x)) - h(x) m(x) \ud x \biggr).
\end{equation}
Here $\wt{E}_{\CB}(m)$ is given by the cell problem
\begin{multline*}
  \wt{E}_{\CB}(m) = \inf_{\nu_{\pm} : m_{\tot} = m} \biggl\{
  \barint_{\Gamma} \nu_+^{10/3} + \nu_-^{10/3} + \abs{\nabla
    \nu_{+}}^2 + \abs{\nabla \nu_{-}}^2
  - \barint_{\Gamma} \nu_+^{8/3} + \nu_-^{8/3} \\
  + \frac{1}{2\abs{\Gamma}} D_{\Gamma}(\rho - \rho_b, \rho - \rho_b)
  \biggr\},
\end{multline*}
where $m_{\tot} = \int_{\Gamma} \nu_{+}^2 - \nu_{-}^2$ is constrained
to be equal to $m$.  

In terms of micromagnetics, the former term on the right hand side of
\eqref{eq:energyfunctional} is the anistropic term of magnetization,
and the latter term is the energy due to external magnetic field.
Compared with the usual energy functional used in micromagnetics
\cite{Brown:63, DeSimone:06}, we do not have the stray field energy
term (the nonlocal term) for the magnetostatic interaction and the
exchange term. The reason that the nonlocal term is missing is due to
the fact that the Thomas-Fermi-Dirac-von~Weizs\"{a}cker model only
contains a local term of $m$. One can try to add a term that account
for the magnetostatic interaction at the microscopic level, we would
then obtain the nonlocal term in \eqref{eq:energyfunctional}. However,
this is only natural for a non-collinear model, which will be studied
in future publications. The reason that we do not have the
exchange term in the energy functional is more fundamental. Since the
scaling we consider only covers the smooth regime, there is no hope on
the leading order to recover the exchange term which penalize change
of magnetization on the scale comparable to the atomic length
scale. One might hope to obtain the exchange term by a different
scaling limit or going to the next order, for example, zooming in the
region of domain wall. We would not go further in this direction in
the current work.

\section{Two scale analysis}

\subsection{Matched asymptotics}\label{sec:twoscale}

In this section, we use two scale analysis to build a high-order
approximate solution to the Euler-Lagrange equation.

It is more convenient to work with $\veps$-scaling in this
section. The choice of scaling is also in
agreement with standard homogenization problems. Let
us recall the Euler-Lagrange equations under $\veps$-scaling.
\begin{align*}
  & - \veps^2 \Delta \nu_+^{\veps} + \frac{5}{3} (\nu_+^{\veps})^{7/3}
  - \frac{4}{3} (\nu_+^{\veps})^{5/3} +
  (V^{\veps} - h)\nu_+^{\veps} = 0; \\
  & - \veps^2 \Delta \nu_-^{\veps} + \frac{5}{3} (\nu_-^{\veps})^{7/3}
  - \frac{4}{3} (\nu_-^{\veps})^{5/3} +
  (V^{\veps} + h)\nu_-^{\veps} = 0; \\
  & - \veps^2 \Delta V^{\veps} = 4\pi \bigl((\nu_+^{\veps})^2 +
  (\nu_-^{\veps})^2 - \rho_{b}^{\veps} \bigr).
\end{align*}
We take the following two-scale ansatz
\begin{align*}
  & \nu_+^{\veps}(x) = \nu_{+,0}(x, x/\veps) + \veps
  \nu_{+,1}(x, x/\veps) + \veps^2 \nu_{+,2}(x, x/\veps); \\
  & \nu_-^{\veps}(x) = \nu_{-,0}(x, x/\veps) + \veps
  \nu_{-,1}(x, x/\veps) + \veps^2 \nu_{-,2}(x, x/\veps); \\
  & V^{\veps}(x) = V_0(x, x/\veps) + \veps V_1(x, x/\veps) + \veps^2
  V_2(x, x/\veps).
\end{align*}
Substituting into the equations and matching orders, we obtain for the
leading order
\begin{align}
  & -\Delta_z \nu_{+, 0}(x,z) + \frac{5}{3} \nu_{+, 0}^{7/3}(x,z) -
  \frac{4}{3} \nu_{+, 0}^{5/3}(x,z) \\
  & \nn\hspace{8em} + (V_0(x,z) - h(x)) \nu_{+,0}(x,z) = 0; \\
  & -\Delta_z \nu_{-, 0}(x,z) + \frac{5}{3} \nu_{-, 0}^{7/3}(x,z) -
  \frac{4}{3} \nu_{-, 0}^{5/3}(x,z) \\
  & \nn\hspace{8em} + (V_0(x,z) + h(x)) \nu_{-,0}(x,z) = 0; \\
  & - \Delta_z V_0(x,z) = 4\pi ( \nu_{+,0}^2(x,z) + \nu_{-,0}^2(x,z) -
  \rho_b(z)).
\end{align}
The solutions are given by
\begin{equation}
  \nu_{+, 0}(x,z) = \nu_{+, \CB}(z; h(x)), \qquad 
  \nu_{-, 0}(x,z) = \nu_{-, \CB}(z; h(x)),
\end{equation}
with the potential given by 
\begin{equation}
  V_0(x,z) = V_{\CB}(z; h(x)).
\end{equation}

The next order equations are given by
\begin{align}
  & -\Delta_z \nu_{+,1} - 2\nabla_x\cdot\nabla_z \nu_{+,0} +
  \frac{35}{9} \nu_{+,0}^{4/3} \nu_{+,1} - \frac{20}{9}
  \nu_{+,0}^{2/3} \nu_{+,1} \\
  & \nn\hspace{8em} + (V_0 - h) \nu_{+,1} + V_1 \nu_{+,0} = 0; \\
  & -\Delta_z \nu_{-,1} - 2\nabla_x\cdot\nabla_z \nu_{-,0} +
  \frac{35}{9} \nu_{-,0}^{4/3} \nu_{-,1} - \frac{20}{9}
  \nu_{-,0}^{2/3} \nu_{-,1} \\
  & \nn\hspace{8em} + (V_0 + h) \nu_{-,1} + V_1 \nu_{-,0} = 0; \\
  & - \Delta_z V_1 - 2\nabla_x\cdot\nabla_z V_0 = 8\pi (\nu_{+,0}
  \nu_{+,1} + \nu_{-,0} \nu_{-,1}).
\end{align}

Using the linearized operator $\mc{L}_h$, we may rewrite the set of
equations as
\begin{equation}\label{eq:order1}
  \mc{L}_h  \begin{pmatrix}
    \nu_{+,1} \\
    \nu_{-,1} \\
    V_1
  \end{pmatrix}
  =
  \begin{pmatrix}
    f_{+,1} \\
    f_{-,1} \\
    g_1 
  \end{pmatrix},
\end{equation}
where
\begin{equation*}
  f_{+,1} = 2 \nabla_x\cdot\nabla_z \nu_{+,0}; \quad
  f_{-,1} = 2 \nabla_x\cdot\nabla_z \nu_{-,0}; \quad
  g_1 = - \tfrac{1}{4\pi} \nabla_x\cdot\nabla_z V_0.
\end{equation*}
By the regularity of $\nu_{\pm, 0}$ and $V_0$, it is easy to see that
$f_{\pm, 1}, \, g_1 \in C^{\infty}(\Gamma, H_1^1)$. Inverting
$\mc{L}_h$ by Proposition~\ref{prop:LH}, we then obtain $\nu_{\pm,1}$,
$V_1$, and 
\begin{equation*}
  \nu_{\pm, 1} \in C^{\infty}(\Gamma, H_1^3) \quad
  V_1 \in C^{\infty}(\Gamma, H_1^3).
\end{equation*}

We also have the third order equations
\begin{align}
  & -\Delta_z \nu_{+,2} - 2\nabla_x\cdot\nabla_z \nu_{+,1} - \Delta_x
  \nu_{+,0} + \frac{35}{9}\nu_{+,0}^{4/3}\nu_{+,2} + \frac{70}{27}
  \nu_{+,0}^{1/3} \nu_{+,1}^2 \\
  & \nn\qquad - \frac{20}{9} \nu_{+,0}^{2/3} \nu_{+,2} -
  \frac{20}{27} \nu_{+,0}^{-1/3} \nu_{+,1}^2 + (V_0 - h) \nu_{+,2}
  + V_1 \nu_{+,1} + V_2 \nu_{+,0} = 0; \\
  & -\Delta_z \nu_{-,2} - 2\nabla_x\cdot\nabla_z \nu_{-,1} - \Delta_x
  \nu_{-,0} + \frac{35}{9}\nu_{-,0}^{4/3}\nu_{-,2} + \frac{70}{27}
  \nu_{-,0}^{1/3} \nu_{-,1}^2 \\
  & \nn\qquad - \frac{20}{9} \nu_{-,0}^{2/3} \nu_{-,2} -
  \frac{20}{27} \nu_{-,0}^{-1/3} \nu_{-,1}^2 + (V_0 + h) \nu_{-,2}
  + V_1 \nu_{-,1} + V_2 \nu_{-,0} = 0; \\
  & - \Delta_z V_2 - 2\nabla_x\cdot\nabla_z V_1 - \Delta_x V_0 = 4\pi
  ( \nu_{+,1}^2 + 2\nu_{+, 0}\nu_{+, 2} + \nu_{-,1}^2 +
  2\nu_{-,0}\nu_{-,2} ).
\end{align}
As the first order correction, we solve
\begin{equation}\label{eq:order2}
  \mc{L}_h
  \begin{pmatrix}
    \nu_{+,2} \\
    \nu_{-,2} \\
    V_2 
  \end{pmatrix}
  =
  \begin{pmatrix}
    f_{+,2} \\
    f_{-,2} \\
    g_2 
  \end{pmatrix},
\end{equation}
where 
\begin{align*}
  & f_{+,2} = 2\nabla_x\cdot\nabla_z\nu_{+,1} + \Delta_x\nu_{+,0} -
  \frac{70}{27}\nu_{+,0}^{1/3}\nu_{+,1}^2 +
  \frac{20}{27}\nu_{+,0}^{-1/3}\nu_{+,1}^2 - V_1 \nu_{+,1}; \\
  & f_{-,2} = 2\nabla_x\cdot\nabla_z\nu_{-,1} + \Delta_x\nu_{-,0} -
  \frac{70}{27}\nu_{-,0}^{1/3}\nu_{-,1}^2 +
  \frac{20}{27}\nu_{-,0}^{-1/3}\nu_{-,1}^2 - V_1 \nu_{-,1}; \\
  & g_2 = -\frac{1}{8\pi}(2\nabla_x\cdot\nabla_z V_1 + \Delta_x V_0) -
  \frac{1}{2}(\nu_{+,1}^2 + \nu_{-,1}^2).
\end{align*}
Therefore, \eqref{eq:order2} is solvable to give 
$\nu_{\pm,2}$ and $V_2$ in $C^{\infty}(\Gamma, H_1^3)$.

This procedure can be carried on for even higher order terms. We
omit the details here.

\begin{remark}
  There is an important difference between the Thomas-Fermi type of
  models and the Kohn-Sham type of models considered in
  \cite{ELu:KohnSham}. In the two scale analysis for the Kohn-Sham map
  developed in \cite{ELu:KohnSham}, the macroscopic part of the
  potential on the leading order depends on the density on the order
  of $\veps^2$, making the closure a bit unusual. Here, the
  macroscopic part $\average{V_0}$ is determined on the leading order,
  and it imposes a constraint on the third order densities.  In
  particular, as observed in \cite{ELu:07} and \cite{GarciaLuE:07}, the Coulomb
  potential in Thomas-Fermi type of models are determined locally to the
  leading order, while it is not the case for Kohn-Sham type
  of models. This also leads to important differences in developing
  multiscale algorithms for these two type of models.
\end{remark}

\subsection{Approximate solution}

Let us take the approximate solution built in the last section:
\begin{align*}
  & \nu_+(x) = \nu_{+,0}(\veps x, x) + \veps
  \nu_{+,1}(\veps x, x) + \veps^2 \nu_{+,2}(\veps x, x); \\
  & \nu_-(x) = \nu_{-,0}(\veps x, x) + \veps
  \nu_{-,1}(\veps x, x) + \veps^2 \nu_{-,2}(\veps x, x); \\
  & V(x) = V_0(\veps x, x) + \veps V_1(\veps x, x) + \veps^2 V_2(\veps
  x, x).
\end{align*}
Here we have rescaled the functions into the units in which the lattice
parameter is $1$.

\begin{prop}\label{prop:approximate}
  \begin{equation*}
    \norm{\mc{F}(\nu_{+}, \nu_{-}, 
      V)}_{(L_n^2)^3} \lesssim \veps^3. 
  \end{equation*}
\end{prop}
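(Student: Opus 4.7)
The plan is to substitute the truncated two-scale ansatz directly into $\mc{F}$ and show that, by construction, all terms through order $\veps^2$ cancel, leaving a residual that is $O(\veps^3)$ in $(L_n^2)^3$. Working in atomic-unit scaling, I would write $\nu_{\pm}(x) = U_{\pm}^{\veps}(\veps x, x)$ with $U_{\pm}^{\veps}(X,Z) = \nu_{\pm, 0}(X,Z) + \veps \nu_{\pm,1}(X,Z) + \veps^2 \nu_{\pm,2}(X,Z)$, and similarly $V(x) = V^{\veps}(\veps x, x)$. The chain rule then gives
\begin{equation*}
  \Delta \bigl[g(\veps x, x)\bigr] = (\Delta_Z g)(\veps x, x) + 2\veps (\nabla_X\cdot\nabla_Z g)(\veps x, x) + \veps^2 (\Delta_X g)(\veps x, x),
\end{equation*}
so the Laplacian pieces of $\mc{F}$ produce a polynomial in $\veps$ whose coefficients are exactly those appearing in the matched-asymptotics hierarchy.

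The next step is the nonlinear expansion. For each fractional power $p \in \{7/3, 5/3, 2\}$, I would Taylor expand
\begin{equation*}
  (U_\pm^{\veps})^p = \nu_{\pm, 0}^p + \veps\, p\, \nu_{\pm, 0}^{p-1} \nu_{\pm, 1} + \veps^2 \bigl(p\, \nu_{\pm, 0}^{p-1}\nu_{\pm,2} + \tfrac{p(p-1)}{2} \nu_{\pm, 0}^{p-2} \nu_{\pm, 1}^2\bigr) + R_\pm^{(p)},
\end{equation*}
where $R_\pm^{(p)}$ is the Taylor remainder of order $\veps^3$. This expansion is legal because $\nu_{\pm, \CB} \geq C_\nu/2 > 0$ (inherited from Theorem~\ref{thm:CB} and the positivity assumption \eqref{eq:positive}), so that after taking $\veps$ small the composite $U_\pm^{\veps}$ stays uniformly bounded below and fractional powers are smooth. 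Grouping terms by powers of $\veps$, the $\veps^0$ contribution vanishes by the leading-order cell equations, the $\veps^1$ contribution vanishes by \eqref{eq:order1}, and the $\veps^2$ contribution vanishes by \eqref{eq:order2}; what is left is purely $O(\veps^3)$.

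The main task is then to estimate this residual in $(L_n^2)^3$. It consists of a finite number of terms of the form $\veps^k (\partial^\alpha_X Q)(\veps x, x)$ with $k \geq 3$, where each $Q(X,Z)$ is a product of $\nu_{\pm, j}, V_j$ and bounded functions of $\nu_{\pm, 0}$ (the latter coming from the Taylor remainders). By the construction in the matched-asymptotics section, $\nu_{\pm, j}, V_j \in C^\infty(\Gamma_X, H_1^3(\Gamma_Z))$; Sobolev embedding $H_1^3 \hookrightarrow L_1^\infty$ and smoothness in $X$ give uniform bounds on these products, and the strict positivity of $\nu_{\pm, 0}$ handles negative-exponent factors like $\nu_{\pm, 0}^{-1/3}$. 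The quasi-periodicity of $Q$ in $Z$ together with the rescaling $\int_{n\Gamma} |Q(\veps x, x)|^2\ud x = \veps^{-3}\int_{\Gamma}\barint_\Gamma |Q(X,Z)|^2 \ud Z\ud X$ shows that $\norm{Q(\veps\cdot, \cdot)}_{L_n^2}$ is uniformly bounded in $\veps$ and $n$; pulling out the $\veps^3$ prefactor yields the claim.

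The only genuinely delicate step is bookkeeping the Taylor remainders for the three fractional powers, making sure that the worst negative exponent $\nu_{\pm, 0}^{-2/3}$ (or $\nu_{\pm, 0}^{-1}$ arising from integrating twice in $s$ along the Taylor path) is controlled by the uniform lower bound on $\nu_{\pm,0}$, and that the cross terms mixing $\nu_{\pm, 2}$ with $\veps$ prefactors are correctly attributed to order $\veps^3$ rather than $\veps^2$. Everything else is a routine product estimate using the already-established $C^\infty(\Gamma, H_1^3)$ regularity of the profile functions.
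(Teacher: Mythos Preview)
Your plan is correct and follows the same route as the paper: substitute the ansatz, expand $\Delta$ via the chain rule, Taylor-expand the fractional powers about $\nu_{\pm,0}$, observe that the $\veps^0,\veps^1,\veps^2$ contributions cancel by the hierarchy \eqref{eq:order1}--\eqref{eq:order2}, and bound each leftover term using the $C^\infty(\Gamma,H_1^3)$ regularity of the profiles together with the uniform lower bound on $\nu_{\pm,0}$.

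One small correction: the displayed rescaling formula
\[
\int_{n\Gamma}\abs{Q(\veps x,x)}^2\ud x \;=\;\veps^{-3}\int_\Gamma\barint_\Gamma\abs{Q(X,Z)}^2\ud Z\ud X
\]
is \emph{not} an exact identity---the map $x\mapsto(\veps x,\,x\bmod\Gamma)$ sweeps only a three-dimensional slice of the six-dimensional torus $\Gamma_X\times\Gamma_Z$, and the equality holds only asymptotically (it is the oscillating-test-function lemma from two-scale convergence). This does no harm to your argument, since the uniform $L^\infty$ bound you already obtain from $C^\infty(\Gamma,H_1^3)\hookrightarrow L^\infty(\Gamma\times\Gamma)$ is enough: $\norm{Q(\veps\cdot,\cdot)}_{L_n^2}\leq\norm{Q(\veps\cdot,\cdot)}_{L_n^\infty}\leq\sup_{X,Z}\abs{Q(X,Z)}$. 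This is exactly how the paper proceeds---it never invokes a rescaling identity and simply uses $\norm{f}_{L_n^2}\leq\norm{f}_{L_n^\infty}$ together with Sobolev embedding on the fast cell.
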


\begin{proof}
  Denote
  \begin{equation*}
    (f_{+}, f_{-}, g) = \mc{F}(\nu_{+}, \nu_{-}, V).
  \end{equation*}
  We write
  \begin{equation}
    \begin{aligned}
      f_{+} & = - \Delta \nu_+^{\veps} + \frac{5}{3}
      (\nu_+^{\veps})^{7/3} - \frac{4}{3} (\nu_+^{\veps})^{5/3} +
      (V^{\veps} - h)\nu_+^{\veps} \\
      & = f_{+,1} + f_{+,2} + f_{+,3} + f_{+, 4},
    \end{aligned}
  \end{equation}
  where we have introduced the shorthand notation
 \begin{align*}
    & f_{+,1} = - \veps^3 \Delta_x \nu_{+,1} - 2\veps^3 \nabla_x \cdot
    \nabla_z \nu_{+,2} - \veps^4 \Delta_x \nu_{+,2}; \\
    & f_{+,2} = \frac{5}{3} ( \nu_{+,0} + \veps \nu_{+,1} +
    \veps^2 \nu_{+,2} )^{7/3} - \frac{5}{3} \nu_{+,0}^{7/3} \\
    & \hspace{8em} - \frac{35}{9} \veps \nu_{+,0}^{4/3} (\nu_{+,1} +
    \veps \nu_{+,2}) - \frac{70}{27} \veps^2
    \nu_{+,0}^{1/3} \nu_{+,1}^2; \\
    & f_{+,3} = - \frac{4}{3} ( \nu_{+,0} + \veps \nu_{+,1} + \veps^2
    \nu_{+,2} )^{5/3} + \frac{4}{3} \nu_{+,0}^{5/3} \\
    & \hspace{8em} + \frac{20}{9} \veps \nu_{+,0}^{2/3} (\nu_{+,1} +
    \veps \nu_{+,2}) + \frac{20}{27} \veps^2 \nu_{+,0}^{-1/3} \nu_{+,1}^2; \\
    & f_{+,4} = \veps^3 V_1 \nu_{+,2} + \veps^3 V_2 \nu_{+,1} +
    \veps^4 V_2 \nu_{+,2}.
  \end{align*}
  Here $\nu$ and $V$ are evaluated at $(\veps x, x)$. Since $
  \norm{f}_{L_n^2} \leq \norm{f}_{L_n^{\infty}}$, it suffices to prove
  that $\norm{f_{+, i}}_{L_n^{\infty}} \lesssim \veps^3 $ for $i = 1,
  2, 3, 4$.  Since we have $\nu_{+, 1},\, \nu_{+, 2} \in
  C^{\infty}(\Gamma, H_1^3)$, by Sobolev embedding $H_1^2 \subset
  L^{\infty}(\Gamma)$, it is easy to see that
  $\norm{f_{+,1}}_{L_n^{\infty}} \lesssim \veps^3$. The estimates for
  $f_{+,2}$ and $f_{+,3}$ follow from Taylor expansion. Finally,
  the desired estimate of $f_{+,4}$ is obtained from Sobolev embedding
  applied on $V_1$, $V_2$, $\nu_{+,1}$ and $\nu_{+,2}$. In summary, we have
  \begin{equation*}
    \norm{f_+}_{L_n^2} \lesssim \veps^3.
  \end{equation*}

  The argument for $f_{-}$ is completely the same as that for
  $f_{+}$. Let us consider
  \begin{equation}
    \begin{aligned}
      g & = - \veps^2 \Delta V^{\veps} - 4\pi \bigl((\nu_+^{\veps})^2
      + (\nu_-^{\veps})^2 - \rho_{b}^{\veps} \bigr) \\
      & = g_1 + g_2 + g_3,
    \end{aligned}
  \end{equation}
  where we have introduced the shorthands
  \begin{align*}
    g_1 & = - \veps^3 \Delta_x V_1 - 2\veps^3 \nabla_x\cdot\nabla_z V_2 -
    \veps^4 \Delta_x V_2; \\
    g_2 & = 8\pi \veps^3 \nu_{+,1}\nu_{+,2} + 4\pi \veps^4 \nu_{+,2}^2; \\
    g_3 & = 8\pi \veps^3 \nu_{-,1}\nu_{-,2} + 4\pi \veps^4 \nu_{-,2}^2. 
  \end{align*}
  It is clear that by analogous argument as above, we have the
  estimate $\norm{g}_{L_n^2} \lesssim \veps^3$.

\end{proof}

\section{Proof of Theorem~\ref{thm:main}}
Similar to \cites{ELu:CPAM, ELu:KohnSham}, we use Newton-Raphson
iteration to find a solution to the Euler-Lagrange equation in the
neighborhood of the approximated solution constructed above.

We will start with the approximate solution we constructed 
\begin{equation}\label{eq:defu0}
  u^0 = (\nu_+^0, \nu_-^0, V^0) \in \mc{D}_n,
\end{equation}
where
\begin{align*}
  & \nu_+^0(x) = \nu_{+,0}(\veps x, x) + \veps
  \nu_{+,1}(\veps x, x) + \veps^2 \nu_{+,2}(\veps x, x); \\
  & \nu_-^0(x) = \nu_{-,0}(\veps x, x) + \veps
  \nu_{-,1}(\veps x, x) + \veps^2 \nu_{-,2}(\veps x, x); \\
  & V^0(x) = V_0(\veps x, x) + \veps V_1(\veps x, x) + \veps^2
  V_2(\veps x, x).
\end{align*}
Here the superscript $0$ is used to indicate the initial point 
for the Newton iteration. Note that we have rescaled the
functions, so that $\nu_{\pm}^0$ and $V^0$ are defined in $n\Gamma$.

We need several additional lemmas for the proof of
Theorem~\ref{thm:main}. The following lemmas are proved under the
assumptions in the statement of
Theorem~\ref{thm:main}.

\begin{lemma}\label{lem:compare1}
  For any $\kappa > 0$, there exists positive constants $h_0$ and
  $\veps_0$, such that for all $\veps \leq \veps_0$ and $h$ with
  $\norm{h}_{L^{\infty}(\Gamma)} \leq h_0$, we have
  \begin{equation*}
    \norm{\mc{L}_{u^0} - \mc{L}_{\per}}_{\ms{L}((L_n^2)^3)} \leq 
    \kappa.
  \end{equation*}
\end{lemma}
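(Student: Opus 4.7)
The plan is to observe that the difference $\mc{L}_{u^0} - \mc{L}_{\per}$ is a \emph{purely multiplicative} operator: the $-\Delta$ entries in the $(1,1)$ and $(2,2)$ blocks and the $\frac{1}{8\pi}\Delta$ entry in the $(3,3)$ block cancel exactly, since both operators share the same principal part. Consequently, the operator norm on $(L_n^2)^3$ is bounded by the maximum $L_n^\infty$ norm of its six nontrivial entries, which fall into three groups: (i) the diagonal potential differences $(V^0 - V_{\per}) \mp h(\veps x)$; (ii) the diagonal fractional-power differences $\tfrac{35}{9}[(\nu_\pm^0)^{4/3} - (\nu_{\pm,\per})^{4/3}] - \tfrac{20}{9}[(\nu_\pm^0)^{2/3} - (\nu_{\pm,\per})^{2/3}]$; (iii) the off-diagonal couplings $\nu_\pm^0 - \nu_{\pm,\per}$. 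The task then reduces to bounding $\norm{\nu_\pm^0 - \nu_{\pm,\per}}_{L_n^\infty}$ and $\norm{V^0 - V_{\per}}_{L_n^\infty}$ in terms of $h_0$ and $\veps$.

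To bound these quantities, I write $\nu_+^0(x) = \nu_{+,\CB}(x; h(\veps x)) + \veps\,\nu_{+,1}(\veps x, x) + \veps^2 \nu_{+,2}(\veps x, x)$ and similarly for $\nu_-^0$ and $V^0$. The $\veps$ and $\veps^2$ correction terms contribute at most $C\veps$ in $L_n^\infty$: by the construction in Section~\ref{sec:twoscale}, $\nu_{\pm,k}, V_k \in C^\infty(\Gamma, H_1^3)$, and the Sobolev embedding $H_1^2 \hookrightarrow L^\infty$ applied in the fast variable, together with smoothness in the slow variable on the compact set $\Gamma$, bounds them uniformly. For the leading piece, uniqueness in Theorem~\ref{thm:CB} gives $u_{\CB}(\cdot; 0) = u_{\per}$, and the $C^\infty$ dependence $h \mapsto u_{\CB}(\cdot; h) \in (H_1^3)^3$ yields
\begin{equation*}
  \norm{\nu_{+,\CB}(\cdot; t) - \nu_{+,\per}}_{L^\infty(\Gamma)} \lesssim |t|
  \qquad \text{uniformly for } |t| \le h_0.
\end{equation*}
Setting $t = h(\veps x)$ and taking the supremum over $x \in n\Gamma$ gives $\norm{\nu_+^0 - \nu_{+,\per}}_{L_n^\infty} \lesssim h_0 + \veps$, and the same conclusion holds for $\nu_-^0$ and $V^0$.

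Finally, for the fractional-power entries I apply the mean value theorem to $t \mapsto t^{4/3}$ and $t \mapsto t^{2/3}$. Assumption~\eqref{eq:positive} gives $\nu_{\pm,\per} \ge C_\nu$, and once $h_0, \veps_0$ are chosen small enough that $\norm{\nu_\pm^0 - \nu_{\pm,\per}}_{L_n^\infty} \le C_\nu/2$, the functions $\nu_\pm^0$ and $\nu_{\pm,\per}$ remain in a compact subinterval of $(0,\infty)$ on which both fractional powers are Lipschitz. This yields
\begin{equation*}
  \norm{(\nu_\pm^0)^s - (\nu_{\pm,\per})^s}_{L_n^\infty} \lesssim
  \norm{\nu_\pm^0 - \nu_{\pm,\per}}_{L_n^\infty} \lesssim h_0 + \veps,
  \qquad s \in \{2/3, 4/3\}.
\end{equation*}
Summing the contributions of all entries gives a bound of the form $C(h_0 + \veps)$ on $\norm{\mc{L}_{u^0} - \mc{L}_{\per}}_{\ms{L}((L_n^2)^3)}$, and shrinking $h_0$ and $\veps_0$ makes this at most $\kappa$. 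The only real subtlety is the initial need to keep $\nu_\pm^0$ bounded away from zero so that the fractional powers behave; this is handled in one shot by first making $h_0, \veps_0$ small enough for the $L_n^\infty$ closeness to $\nu_{\pm,\per}$ to imply positivity with constant $C_\nu/2$.
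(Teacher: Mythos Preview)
Your proof is correct and follows essentially the same approach as the paper's: both observe that $\mc{L}_{u^0}-\mc{L}_{\per}$ is a matrix of multiplication operators (the Laplacians cancel), both bound $\norm{\nu_\pm^0-\nu_{\pm,\per}}_{L_n^\infty}$ and $\norm{V^0-V_{\per}}_{L_n^\infty}$ by $\Or(\norm{h}_{L^\infty}+\veps)$ via the smooth dependence of $u_{\CB}(\cdot;h)$ on $h$ (Theorem~\ref{thm:CB}) together with the uniform $L^\infty$ bounds on the two-scale correctors, and both handle the fractional powers by a mean-value argument after first securing $\nu_\pm^0\geq C_\nu/2$. The paper organizes the estimate by bounding $(f_+,f_-,g)$ row by row, whereas you enumerate the matrix entries directly, but the content is the same.
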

\begin{proof}
  By definition, given $(\omega_{+}, \omega_{-}, W) \in
  (H_n^2)^3$, we have
  \begin{equation}
    ( \mc{L}_{u^0} - \mc{L}_{\per} ) 
    \begin{pmatrix}
      \omega_{+} \\
      \omega_{-} \\
      W 
    \end{pmatrix}
    = 
    \begin{pmatrix}
      \delta\mc{L}_{+} & 0 & \delta\nu_+  \\
      0 & \delta\mc{L}_{-} & \delta\nu_-  \\
      \delta\nu_+ & \delta\nu_- & 0  
    \end{pmatrix}
    \begin{pmatrix}
      \omega_{+} \\
      \omega_{-} \\
      W 
    \end{pmatrix}
    \equiv
    \begin{pmatrix}
      f_{+} \\ f_{-} \\ g 
    \end{pmatrix},
  \end{equation}
  where the last equality serves as a definition and we have
  introduced the notations
  \begin{equation*}
    \delta \mc{L}_{+} = 
    \frac{35}{9} \Bigl( (\nu_{+}^0)^{4/3} - \nu_{+, \per}^{4/3}
    \Bigr) - \frac{20}{9} \Bigl( (\nu_{+}^0)^{2/3} - \nu_{+, \per}^{2/3} \Bigr)
    + (V^0 - h - V_{\per}),
  \end{equation*}
  \begin{equation*}
    \delta \mc{L}_{-} = 
    \frac{35}{9} \Bigl( (\nu_{-}^0)^{4/3} - \nu_{-, \per}^{4/3}
    \Bigr) - \frac{20}{9} \Bigl( (\nu_{-}^0)^{2/3} - \nu_{-, \per}^{2/3} \Bigr)
    + (V^0 + h - V_{\per}),
  \end{equation*}
  and 
  \begin{equation*}
    \delta\nu_{\pm} = \nu^0_{\pm} - \nu_{\pm, \per}. 
  \end{equation*}
  
  Note that
  \begin{equation*}
    \begin{aligned}
      \delta\nu_{+}(x) & = \nu_{+,0}(\veps x, x) + \veps
      \nu_{+,1}(\veps x, x) + \veps^2 \nu_{+,2}(\veps x, x) - \nu_{+,
        \per}(x) \\
      & = \bigl(\nu_{+, \CB}(x; h(\veps x)) - \nu_{+, \CB}(x; 0)\bigr)
      + \veps \nu_{+,1}(\veps x, x) + \veps^2 \nu_{+,2}(\veps x, x).
    \end{aligned}
  \end{equation*}
  By the smooth dependence on $\nu_{+, \CB}(\cdot; h)$ on the
  parameter $h$, we have
  \begin{equation*}
    \norm{\nu_{+, \CB}(x; h(\veps x)) - \nu_{+, \CB}(x; 0)}_{L^{\infty}(n\Gamma)}
    \lesssim \norm{h}_{L^{\infty}(\Gamma)}.
  \end{equation*}
  The other two terms are of higher order in $\veps$, since  $\nu_{+,1}(\veps
  x, x)$ and $\nu_{+,2}(\veps x, x)$ are bounded uniformly in
  $n\Gamma$. Therefore, we obtain
  \begin{equation*}
    \norm{\delta\nu_{+}}_{L^{\infty}(n\Gamma)} \lesssim \norm{h}_{L^{\infty}(\Gamma)} 
    + \veps.
  \end{equation*}
  Obviously, the same estimates also hold for $\delta\nu_{-}$. Hence,
  \begin{equation*}
    \begin{aligned}
      \norm{g}_{L_n^2} & \leq \norm{\delta
        \nu_+}_{L^{\infty}(n\Gamma)} \norm{\omega_+}_{L_n^2} +
      \norm{\delta \nu_-}_{L^{\infty}(n\Gamma)} \norm{\omega_-}_{L_n^2} \\
      & \lesssim (\norm{h}_{L^{\infty}(\Gamma)} + \veps)
      (\norm{\omega_+}_{L_n^2} + \norm{\omega_-}_{L_n^2}).
    \end{aligned}
  \end{equation*}
    
  The analysis for $f_+$ and $f_-$ are the same, let us study $f_+$.
  Using Taylor expansion, we have
  \begin{equation*}
    \Bigl\lvert (\nu_+^0)^{4/3}(x) - (\nu_{+,\per})^{4/3}(x) \Bigr\rvert
    \leq \frac{4}{3}  \Bigl\lvert \nu_+^0(x) - \nu_{+,\per}(x) \Bigr\rvert 
    \max_{\nu \in [\nu_{+,\mathrm{min}}(x), \nu_{+, \mathrm{max}}(x)]} \nu^{1/3},
  \end{equation*}
  where $\nu_{+,\mathrm{min}}(x) = \min(\nu_+^0(x), \nu_{+,\per}(x))$
  and $\nu_{+,\mathrm{max}}(x) = \max(\nu_+^0(x),
  \nu_{+,\per}(x))$. Since
  \begin{equation*}
    \abs{\nu_+^0(x) - \nu_{+, \per}(x)} \leq 
    \norm{\delta \nu_+^0 }_{L^{\infty}} \lesssim 
    \norm{h}_{L^{\infty}(\Gamma)} + \veps,
  \end{equation*}
  we have for $h_0$ and $\veps_0$ sufficiently small, $\nu_+^0(x)$ is
  bounded from above and also from below away from zero uniformly for
  $x \in n\Gamma$. Hence $\max_{\nu \in [\nu_{+,\mathrm{min}}(x),
    \nu_{+, \mathrm{max}}(x)]} \nu^{1/3}$ is bounded. Therefore,
  \begin{equation*}
    \bigl\lvert (\nu_+^0)^{4/3}(x) - (\nu_{+,\per})^{4/3}(x) \bigr\rvert
    \lesssim \norm{h}_{L^{\infty}(\Gamma)} + \veps.
  \end{equation*}
  It follows that
  \begin{equation}\label{eq:f1}
    \biggl\lVert \frac{35}{9} \bigl((\nu_+^0)^{4/3} - (\nu_{+,\per})^{4/3}\bigr)
    \omega_+ \biggr\rVert_{L_n^2} \lesssim \bigl(\norm{h}_{L^{\infty}(\Gamma)} 
    + \veps \bigr) \norm{\omega_+}_{L_n^2}.
  \end{equation}
  Using similar arguments, we have
  \begin{equation}\label{eq:f2}
    \biggl\lVert \frac{20}{9} \bigl((\nu_+^0)^{2/3} - (\nu_{+,\per})^{2/3}\bigr)
    \omega_+ \biggr\rVert_{L_n^2} \lesssim \bigl(\norm{h}_{L^{\infty}(\Gamma)} 
    + \veps \bigr) \norm{\omega_+}_{L_n^2}.
  \end{equation}
  Compare the difference of $V^0$ and $V_{\per}$, we have
  \begin{equation*}
    V^0 - V_{\per} = V_{\CB}(x; h(\veps x)) - V_{\CB}(x; 0) 
    + \veps V_1(\veps x, x) + \veps^2 V_2(\veps x, x). 
  \end{equation*}
  Analogous to the control of $\delta \nu_+$, we have
  \begin{equation*}
    \norm{V^0 - V_{\per}}_{L^{\infty}(n\Gamma)} \lesssim 
    \norm{h}_{L^{\infty}(\Gamma)} + \veps.
  \end{equation*}
  Therefore, 
  \begin{equation}\label{eq:f3}
    \begin{aligned}
      \norm{(V^0 - H - V_{\per}) \omega_+}_{L_n^2} & \leq \norm{V^0 -
        h - V_{\per}}_{L^{\infty}(n\Gamma)} \norm{\omega_+}_{L_n^2} \\
      & \lesssim (\norm{h}_{L^{\infty}(\Gamma)} + \veps)
      \norm{\omega_+}_{L_n^2}.
    \end{aligned}
  \end{equation}
  We also have
  \begin{equation}\label{eq:f4}
    \norm{\delta \nu_+ W}_{L_n^2} \leq \norm{\delta \nu_+}_{L^{\infty}(n\Gamma)}
    \norm{W}_{L_n^2} \lesssim (\norm{h}_{L^{\infty}(\Gamma)} + \veps)
    \norm{W}_{L_n^2}.
  \end{equation}
  Combining \eqref{eq:f1}--\eqref{eq:f4}, we obtain
  \begin{equation*}
    \norm{f_+}_{L_n^2} \lesssim (\norm{h}_{L^{\infty}(\Gamma)} + \veps)
    (\norm{\omega_+}_{L_n^2} + \norm{W}_{L_n^2}).
  \end{equation*}
  Therefore,
  \begin{equation*}
    \norm{\mc{L}_{u^0} - \mc{L}_{\per}}_{\ms{L}((L_n^2)^3)} \lesssim
    \norm{h}_{L^{\infty}(\Gamma)} + \veps.
  \end{equation*}
  The conclusion of the Lemma follows.
\end{proof}

\begin{coro}\label{coro:invL0}
  There exists positive constants $h_0$ and
  $\veps_0$, such that for all $\veps \leq \veps_0$ and $h$ with
  $\norm{h}_{L^{\infty}(\Gamma)} \leq h_0$, we have
  \begin{equation*}
    \norm{\mc{L}_{u^0}^{-1}}_{\ms{L}((L_n^2)^3, (H_n^2)^3)} \lesssim 1.
  \end{equation*}
  In particular, the bound is independent of $n$.
\end{coro}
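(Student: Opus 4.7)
The plan is to treat $\mc{L}_{u^0}$ as a small perturbation of $\mc{L}_{\per}$ and invert it by a Neumann series argument. All ingredients are already in place: Assumption~\ref{assump:stability} (through Proposition~\ref{prop:regularityIL}) controls $\mc{L}_{\per}^{-1}$ uniformly in $n$, and Lemma~\ref{lem:compare1} makes the difference $\mc{L}_{u^0} - \mc{L}_{\per}$ as small as we wish in $\ms{L}((L_n^2)^3)$ by choosing $h_0$ and $\veps_0$ small.

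First I would fix $M$ as the uniform bound in Assumption~\ref{assump:stability}, and then invoke Lemma~\ref{lem:compare1} with $\kappa = 1/(2M)$ to select $h_0$ and $\veps_0$ (both independent of $n$) such that
\begin{equation*}
  \norm{\mc{L}_{u^0} - \mc{L}_{\per}}_{\ms{L}((L_n^2)^3)} \leq \frac{1}{2M}.
\end{equation*}
Viewing both operators as acting on $(L_n^2)^3$ with domain $(H_n^2)^3$, I would use the factorization
\begin{equation*}
  \mc{L}_{u^0} = \bigl(I + (\mc{L}_{u^0} - \mc{L}_{\per})\mc{L}_{\per}^{-1}\bigr)\,\mc{L}_{\per},
\end{equation*}
valid on $(L_n^2)^3$ since by Assumption~\ref{assump:stability} the image of $\mc{L}_{\per}^{-1}$ sits inside the common domain $(H_n^2)^3$. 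The perturbation $(\mc{L}_{u^0} - \mc{L}_{\per})\mc{L}_{\per}^{-1}$ has norm at most $(1/(2M)) \cdot M = 1/2$ on $(L_n^2)^3$, so the Neumann series gives that $I + (\mc{L}_{u^0} - \mc{L}_{\per})\mc{L}_{\per}^{-1}$ is invertible on $(L_n^2)^3$ with inverse norm bounded by $2$.

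Composing, I obtain
\begin{equation*}
  \mc{L}_{u^0}^{-1} = \mc{L}_{\per}^{-1}\bigl(I + (\mc{L}_{u^0} - \mc{L}_{\per})\mc{L}_{\per}^{-1}\bigr)^{-1}.
\end{equation*}
The inner factor maps $(L_n^2)^3 \to (L_n^2)^3$ with norm at most $2$; Proposition~\ref{prop:regularityIL} (with $k=0$) then provides the gain of two derivatives, since $\mc{L}_{\per}^{-1}$ sends $(L_n^2)^3$ to $(H_n^2)^3$ with norm at most $C M$. All three bounds are independent of $n$, giving $\norm{\mc{L}_{u^0}^{-1}}_{\ms{L}((L_n^2)^3, (H_n^2)^3)} \lesssim M$.

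There is no real obstacle here; the argument is a textbook perturbative inversion. The only organisational care needed is to make sure the Neumann inversion is performed on $(L_n^2)^3$, where Lemma~\ref{lem:compare1} gives the small-norm estimate, while the regularity gain is supplied entirely by the outer application of $\mc{L}_{\per}^{-1}$. The $n$-independence of the final constant simply follows from tracking the $n$-independence of $M$, of the bound in Lemma~\ref{lem:compare1}, and of the constant $C(0)$ in Proposition~\ref{prop:regularityIL}.
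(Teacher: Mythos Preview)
Your proposal is correct and follows essentially the same perturbative Neumann-series argument as the paper: factor $\mc{L}_{u^0}^{-1} = \mc{L}_{\per}^{-1}\bigl(I + (\mc{L}_{u^0} - \mc{L}_{\per})\mc{L}_{\per}^{-1}\bigr)^{-1}$, use Lemma~\ref{lem:compare1} together with Assumption~\ref{assump:stability} to make the perturbation small, and then apply Proposition~\ref{prop:regularityIL} for the $H_n^2$ gain. Your version is in fact slightly more explicit about the choice $\kappa = 1/(2M)$ and about carrying out the Neumann inversion on $(L_n^2)^3$, but the content is the same.
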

\begin{proof}
  Since
  \begin{equation*}
    \mc{L}_{u^0} = \mc{L}_{\per} + (\mc{L}_{u^0} - \mc{L}_{\per}),
  \end{equation*}
  we have 
  \begin{equation*}
    \mc{L}_{u^0}^{-1} = \mc{L}_{\per}^{-1} \bigl( 
    \mc{I} + (\mc{L}_{u^0} - \mc{L}_{\per})\mc{L}_{\per}^{-1} \bigr)^{-1}
  \end{equation*}
  if the right hand side is well-defined, where $\mc{I}$ is the
  identity operator.

  By Assumption~\ref{assump:stability}, there exists $\kappa > 0$ such
  that $ \norm{\mc{L}_{u^0} - \mc{L}_{\per}}_{\ms{L}((H_n^2)^3,
    (L_n^2)^3)} \leq \kappa$ implies
  \begin{equation*}
    \norm{(\mc{L}_{u^0} - \mc{L}_{\per})\mc{L}_{\per}^{-1}}_{\ms{L}((H_n^2)^3)}
    \leq 1/2,
  \end{equation*}
  and hence $\mc{I} + (\mc{L}_{u^0} -
  \mc{L}_{\per})\mc{L}_{\per}^{-1}$ is invertible on $(H_n^2)^3$ with
  the norm of the inverse less than $2$. Therefore, by
  Lemma~\ref{lem:compare1}, there exists $h_0$ and $\veps_0$
  sufficiently small that
  \begin{equation*}
    \norm{\mc{L}_{u^0}^{-1}}_{\ms{L}((L_n^2)^3, (H_n^2)^3)}
    \leq 2 \norm{\mc{L}_{\per}^{-1}}_{\ms{L}((L_n^2)^3, (H_n^2)^3)}.
  \end{equation*}
  The corollary is proved by combining the above inequality with
  Assumption~\ref{assump:stability}. 
\end{proof}

\begin{lemma}\label{lem:compare2}
  If $u, u'$ satisfy $\norm{u - u^0}_{(H_n^2)^3}\leq \gamma \veps^3$,
  $\norm{u' - u^0}_{(H_n^2)^3}\leq \gamma \veps^3$,  then
  \begin{equation*}
    \norm{ \mc{L}_u - \mc{L}_{u'} }_{\ms{L}((L_n^2)^3)}
    \lesssim C(\gamma) \veps^{-3/2} \norm{ u - u' }_{(H_n^2)^3}.
  \end{equation*}
\end{lemma}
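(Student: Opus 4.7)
My plan is to show that $\mc{L}_u - \mc{L}_{u'}$ is a pure multiplication operator on $(L_n^2)^3$ (the Laplacian entry in the $(3,3)$ position cancels in the difference) and then bound each matrix entry in $L_n^\infty$ via a rescaled Sobolev embedding with the optimal exponent. Explicitly, writing $u = (\nu_+, \nu_-, V)$ and similarly for $u'$, one has
\begin{equation*}
\mc{L}_u - \mc{L}_{u'} =
\begin{pmatrix}
\delta\mc{L}_+ & 0 & \nu_+ - \nu'_+ \\
0 & \delta\mc{L}_- & \nu_- - \nu'_- \\
\nu_+ - \nu'_+ & \nu_- - \nu'_- & 0
\end{pmatrix},
\end{equation*}
where $\delta\mc{L}_\pm = \tfrac{35}{9}(\nu_\pm^{4/3} - (\nu'_\pm)^{4/3}) - \tfrac{20}{9}(\nu_\pm^{2/3} - (\nu'_\pm)^{2/3}) + (V - V')$. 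By $\norm{f\omega}_{L_n^2} \leq \norm{f}_{L_n^\infty}\norm{\omega}_{L_n^2}$, it suffices to bound the $L_n^\infty$ norm of each entry by $C(\gamma)\veps^{-3/2}\norm{u - u'}_{(H_n^2)^3}$.

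I would next verify that $\nu_\pm,\nu'_\pm$ stay in a fixed interval $[C_\nu/4, C]$ uniformly on $n\Gamma$, which is needed to keep the fractional powers Lipschitz. The approximate solution $u^0$ satisfies $\nu_\pm^0 \geq C_\nu/2$ (as established in the proof of Lemma~\ref{lem:compare1}), and the scaled embedding derived below gives $\norm{\nu_\pm - \nu_\pm^0}_{L_n^\infty} \lesssim \veps^{-3/2}\norm{u - u^0}_{(H_n^2)^3} \lesssim \gamma\veps^{3/2}$, which is smaller than $C_\nu/4$ for $\veps_0$ small; the same bound holds for $u'$. With $\nu_\pm,\nu'_\pm \in [C_\nu/4, C]$, the mean value theorem yields $\abs{\nu_\pm^{4/3} - (\nu'_\pm)^{4/3}} \leq C(\gamma)\abs{\nu_\pm - \nu'_\pm}$ and the analogous bound for the $2/3$ power, so every entry of $\mc{L}_u - \mc{L}_{u'}$ is pointwise dominated by $C(\gamma)(\abs{\nu_+ - \nu'_+} + \abs{\nu_- - \nu'_-} + \abs{V - V'})$.

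The key ingredient, and the only step where $\veps$-dependence enters, is the scaled Sobolev embedding
\begin{equation*}
\norm{f}_{L_n^\infty} \lesssim \veps^{-3/2}\norm{f}_{H_n^2},
\end{equation*}
which I would establish by setting $g(y) = f(ny)$ on the fixed cell $\Gamma$, observing that $\norm{\nabla_y^j g}_{L^2(\Gamma)} = n^j\norm{\nabla_x^j f}_{L_n^2}$, and invoking the three-dimensional Gagliardo--Nirenberg interpolation $\norm{g}_{L^\infty} \lesssim \norm{g}_{H^2(\Gamma)}^{3/4}\norm{g}_{L^2(\Gamma)}^{1/4}$; the derivative rescaling combined with $n = 1/\veps$ produces exactly the factor $n^{3/2} = \veps^{-3/2}$. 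Applying this componentwise to $u - u'$ and combining with the previous paragraph gives the desired operator norm bound. The main subtle point is obtaining the fractional exponent $-3/2$ rather than the naive $-2$ that would come from the plain embedding $\norm{g}_{L^\infty}\leq C\norm{g}_{H^2(\Gamma)}$; using the interpolation against $\norm{g}_{L^2}$ is what captures the extra half-derivative of room between $H^2$ and the critical embedding index $3/2$ in three dimensions.
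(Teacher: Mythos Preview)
Your proof is correct and follows the same structure as the paper's: write $\mc{L}_u - \mc{L}_{u'}$ as a matrix of multiplication operators, control the fractional-power entries via the mean value theorem using uniform upper and lower bounds on $\nu_\pm$, and bound each entry in $L^\infty$ by a scaled Sobolev embedding.

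The one place where you work harder than necessary is the embedding itself. You rescale to the unit cell and invoke Gagliardo--Nirenberg interpolation to recover the exponent $3/2$, warning that the ``plain'' embedding would give $-2$. The paper instead stays on $n\Gamma$ and uses the plain Sobolev inequality
\[
\norm{f}_{L^\infty(n\Gamma)} \lesssim \norm{f}_{H^2(n\Gamma)},
\]
whose constant is independent of $n$ (the embedding $H^2 \hookrightarrow L^\infty$ in three dimensions does not degrade when the domain is enlarged). The factor $n^{3/2} = \veps^{-3/2}$ then enters only through the identity $\norm{f}_{H^2(n\Gamma)} = n^{3/2}\norm{f}_{H_n^2}$, which is nothing but the $n^{-3}$ averaging built into the definition of $\norm{\cdot}_{L_n^2}$. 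So the ``naive $-2$'' you worry about is an artifact of rescaling to $\Gamma$ first; working directly on $n\Gamma$, no interpolation is needed.
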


\begin{proof}
  By definition, given $(\omega_{+}, \omega_{-}, W) \in (L_n^2)^3$, we
  have
  \begin{equation}
    ( \mc{L}_{u} - \mc{L}_{u'} ) 
    \begin{pmatrix}
      \omega_{+} \\
      \omega_{-} \\
      W 
    \end{pmatrix}
    = 
    \begin{pmatrix}
      \delta\mc{L}_{+} & 0 & \delta\nu_+ \\
      0 & \delta\mc{L}_{-} & \delta\nu_- \\
      \delta\nu_+ & \delta\nu_- & 0  
    \end{pmatrix}
    \begin{pmatrix}
      \omega_{+} \\
      \omega_{-} \\
      W 
    \end{pmatrix}
    \equiv
    \begin{pmatrix}
      f_{+} \\ f_{-} \\ g
    \end{pmatrix},
  \end{equation}
  where the last equality serves as a definition and we have
  introduced the notations
  \begin{equation*}
    \delta \mc{L}_{\pm} = 
    \frac{35}{9} \Bigl( (\nu_{\pm})^{4/3} - (\nu_{\pm}')^{4/3}
    \Bigr) - \frac{20}{9} \Bigl( (\nu_{\pm})^{2/3} - (\nu_{\pm}')^{2/3} \Bigr)
    + (V - V'),
  \end{equation*}
  and 
  \begin{equation*}
    \delta\nu_{\pm} = \nu_{\pm} - \nu_{\pm}'. 
  \end{equation*}

  First let us control $g$. By Sobolev inequality, we have
  \begin{equation*}
    \norm{\delta\nu_{\pm}}_{L^{\infty}(n \Gamma)} 
    \lesssim \norm{\delta\nu_{\pm}}_{H^{2}(n \Gamma)} 
    = n^{3/2} \norm{\delta\nu_{\pm}}_{H^2_n}. 
  \end{equation*}
  Hence, 
  \begin{equation*}
    \begin{aligned}
      \norm{g}_{L^2_n} & = \norm{\delta\nu_+ \omega_+ +
        \delta\nu_- \omega_-}_{L^2_n} \\
      & \leq \norm{\delta\nu_+}_{L^{\infty}(n\Gamma)}
      \norm{\omega_+}_{L_n^2} +
      \norm{\delta\nu_-}_{L^{\infty}(n\Gamma)}
      \norm{\omega_-}_{L_n^2} \\
      & \leq n^{3/2} \Bigl(
      \norm{\delta\nu_{+}}_{H^2_n}\norm{\omega_+}_{L_n^2} +
      \norm{\delta\nu_{-}}_{H^2_n}\norm{\omega_-}_{L_n^2} \Bigr).
    \end{aligned}
  \end{equation*}
  Similarly, we have
  \begin{equation*}
    \norm{ \delta\nu_+ W }_{L^2_n} 
    \leq n^{3/2} \norm{\delta\nu_+}_{H^2_n} \norm{W}_{L^2_n}.
  \end{equation*}
  Now consider the term $\delta\mc{L}_+\omega_+$, we have
  \begin{equation*}
    \bigl\lvert (\nu_+)^{4/3}(x) - (\nu_+')^{4/3}(x) \bigr\rvert
    \leq \frac{4}{3}  \abs{\nu_+(x) - \nu_+'(x)} 
    \max_{\nu \in [\nu_{+,\mathrm{min}}(x), \nu_{+, \mathrm{max}}(x)]} \nu^{1/3},
  \end{equation*}
  where $\nu_{+,\mathrm{min}}(x) = \min(\nu_+(x), \nu_+'(x))$ and
  $\nu_{+,\mathrm{max}}(x) = \max(\nu_+(x), \nu_+'(x))$. Since
 \begin{equation*}
   \abs{\nu_+(x) - \nu_+^0(x)} \leq 
   \norm{\nu_+ - \nu_+^0}_{L^{\infty}} \lesssim 
    n^{3/2} \norm{\nu_+ - \nu_+^0(x)}_{H^2_n} \leq \gamma \veps^{3/2},
  \end{equation*}
  we have for $\veps$ sufficiently small, $\nu_+(x)$ is bounded from
  above and also from below away from zero uniformly for $x \in
  n\Gamma$. The same holds for $\nu_+'(x)$, and hence $\max_{\nu \in
    [\nu_{+,\mathrm{min}}(x), \nu_{+, \mathrm{max}}(x)]} \nu^{1/3}$ is
  bounded. Therefore, 
  \begin{equation*}
    \bigl\lvert (\nu_+)^{4/3}(x) - (\nu_+')^{4/3}(x) \bigr\rvert
    \lesssim C(\gamma) \abs{\nu_+(x) - \nu_+'(x)} \lesssim
    C(\gamma) n^{3/2} \norm{\nu_+ - \nu_+'}_{H^2_n}.
  \end{equation*}
  It follows that
  \begin{equation*}
    \biggl\lVert \frac{35}{9} \bigl((\nu_+)^{4/3} - (\nu_+')^{4/3}\bigr)
    \omega_+ \biggr\rVert_{L_n^2} \lesssim C(\gamma) 
    n^{3/2} \norm{\nu_+ - \nu_+'}_{H^2_n}
    \norm{\omega_+}_{L_n^2}.
  \end{equation*}
  The argument for the remaining two terms in $\delta\mc{L}_+
  \omega_+$ is similar, and we conclude that
  \begin{equation*}
    \norm{\delta\mc{L}_+ \omega_+}_{L_n^2} \lesssim C(\gamma) 
    n^{3/2} \norm{u - u'}_{(H^2_n)^3}
    \norm{\omega_+}_{L_n^2},
  \end{equation*}
  and analogously
  \begin{equation*}
    \norm{\delta\mc{L}_- \omega_-}_{L_n^2} \lesssim C(\gamma)
    n^{3/2} \norm{u - u'}_{(H^2_n)^3}
    \norm{\omega_-}_{L_n^2}.
  \end{equation*}
  In summary, we have $\norm{\delta\mc{L}}_{\ms{L}((L_n^2)^3)} \lesssim
  C(\gamma) n^{3/2} \norm{u - u'}_{(H_n^2)^3} = C(\gamma) \veps^{-3/2}
  \norm{u - u'}_{(H_n^2)^3}$.
\end{proof}

\begin{lemma}\label{lem:diffL}
  If $u, u'$ satisfy $\norm{u - u^0}_{(H_n^2)^3}\leq \gamma \veps^3$,
  $\norm{u' - u^0}_{(H_n^2)^3}\leq \gamma \veps^3$,  then
  \begin{equation*}
    \norm{ \mc{L}_{u^0}^{-1} \bigl(\mc{F}(u) 
      - \mc{F}(u') - \mc{L}_{u^0}(u - u')\bigr) }_{(H_n^2)^3} 
    \lesssim C(\gamma) \veps^{3/2} \norm{ u - u' }_{(H_n^2)^3}.
  \end{equation*}
\end{lemma}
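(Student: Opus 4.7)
The approach is a Taylor-with-remainder argument built around the fact that $\mc{L}_u$ is literally the Fréchet derivative $D\mc{F}(u)$. The plan is to write $\mc{F}(u)-\mc{F}(u')$ as a line integral of $\mc{L}$ along the segment joining $u'$ to $u$, subtract off the value at $u^0$, and then invoke Lemma~\ref{lem:compare2} together with Corollary~\ref{coro:invL0}.

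\textbf{Step 1: Smoothness along the segment.} Set $u_t = u' + t(u-u')$ for $t\in[0,1]$. Since both endpoints satisfy $\|u-u^0\|_{(H_n^2)^3},\,\|u'-u^0\|_{(H_n^2)^3}\le\gamma\veps^3$, convexity gives $\|u_t - u^0\|_{(H_n^2)^3}\le\gamma\veps^3$. By the Sobolev embedding $H_n^2\hookrightarrow L^\infty(n\Gamma)$ (with the $n^{3/2} = \veps^{-3/2}$ factor used in \lemref{lem:compare2}), this forces $\|\nu_{\pm,t}-\nu_{\pm}^0\|_{L^\infty}\lesssim\gamma\veps^{3/2}$, so for $\veps$ small the spin densities along the segment stay uniformly bounded above and bounded away from zero. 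Consequently the nonlinear symbols $\nu^{7/3}$ and $\nu^{5/3}$ are $C^1$ in $\nu$, and $\mc{F}$ is $C^1$ from a neighborhood of $u^0$ in $(H_n^2)^3$ into $(L_n^2)^3$, with differential exactly $\mc{L}_{u_t}$.

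\textbf{Step 2: Fundamental theorem of calculus.} By $C^1$-regularity,
\begin{equation*}
  \mc{F}(u)-\mc{F}(u') = \int_0^1 \mc{L}_{u_t}(u-u')\,\mathrm{d}t,
\end{equation*}
and therefore
\begin{equation*}
  \mc{F}(u)-\mc{F}(u')-\mc{L}_{u^0}(u-u') = \int_0^1\bigl(\mc{L}_{u_t}-\mc{L}_{u^0}\bigr)(u-u')\,\mathrm{d}t.
\end{equation*}
Take the $(L_n^2)^3$-norm under the integral. For each $t$, \lemref{lem:compare2} applies with $u_t$ in place of $u$ (both $u_t$ and $u^0$ are within distance $\gamma\veps^3$ of $u^0$ in $(H_n^2)^3$), giving
\begin{equation*}
  \|\mc{L}_{u_t}-\mc{L}_{u^0}\|_{\ms{L}((L_n^2)^3)} \lesssim C(\gamma)\veps^{-3/2}\|u_t-u^0\|_{(H_n^2)^3} \lesssim C(\gamma)\veps^{-3/2}\cdot\gamma\veps^{3} = C(\gamma)\veps^{3/2}.
\end{equation*}
Hence $\|\mc{F}(u)-\mc{F}(u')-\mc{L}_{u^0}(u-u')\|_{(L_n^2)^3}\lesssim C(\gamma)\veps^{3/2}\|u-u'\|_{(L_n^2)^3}\le C(\gamma)\veps^{3/2}\|u-u'\|_{(H_n^2)^3}$.

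\textbf{Step 3: Invert.} Finally, \corref{coro:invL0} bounds $\mc{L}_{u^0}^{-1}$ from $(L_n^2)^3$ to $(H_n^2)^3$ uniformly in $n$, so composing gives the claimed estimate. The only point that requires a little care is the first step: one must check that $\nu_{\pm,t}$ remains strictly positive uniformly along the segment so that the concave Dirac term and its derivative are legitimately differentiable; this is where the $\veps^{-3/2}$ loss in the Sobolev embedding is compensated by the $\veps^3$ radius of the admissible neighborhood, leaving a safe margin of $\veps^{3/2}$. All other computations reduce to the ones already carried out in \lemref{lem:compare2}.
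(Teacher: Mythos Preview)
Your proof is correct and follows essentially the same route as the paper: write $\mc{F}(u)-\mc{F}(u')-\mc{L}_{u^0}(u-u')$ as $\int_0^1(\mc{L}_{u_t}-\mc{L}_{u^0})(u-u')\,\mathrm{d}t$, apply \lemref{lem:compare2} to bound the integrand by $C(\gamma)\veps^{-3/2}\|u_t-u^0\|_{(H_n^2)^3}\le C(\gamma)\veps^{3/2}$, and finish with \corref{coro:invL0}. Your Step~1 makes explicit the differentiability of $\mc{F}$ along the segment (via the uniform positivity of $\nu_{\pm,t}$), which the paper leaves implicit, but the argument is otherwise identical.
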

\begin{proof}
  We write
  \begin{equation*}
    \mc{F}(u) - \mc{F}(u') - \mc{L}_{u^0}(u-u') = 
    \int_0^1 (\mc{L}_{u_t} - \mc{L}_{u^0} ) (u - u') \ud t,
  \end{equation*}
  where $u_t = t u + (1-t) u'$. It is easy to see that for any $t\in
  [0,1]$, we have
  \begin{equation*}
    \norm{u_t - u^0}_{(H_n^2)^3} \leq \max\Bigl(\norm{u - u^0}_{(H_n^2)^3},
    \norm{u' - u^0}_{(H_n^2)^3}\Bigr).
  \end{equation*}
  Therefore, using Lemma~\ref{lem:compare2}, we obtain
  \begin{equation*}
    \begin{aligned}
      \norm{\mc{F}(u) - \mc{F}(u') - \mc{L}_{u^0}(u-u')}_{(L_n^2)^3} &
      \leq \sup_{t \in[0, 1]} \norm{ \mc{L}_{u_t} - \mc{L}_{u^0}
      }_{\ms{L}((L_n^2)^3)} \norm{u-u'}_{(L_n^2)^3} \\
      & \lesssim \veps^{-3/2} C(\gamma) \sup_{t\in[0,1]} \norm{u_t -
        u^0}_{(H_n^2)^3} \norm{u-u'}_{(L_n^2)^3} \\
      & \leq C(\gamma) \veps^{3/2} \norm{u-u'}_{(L_n^2)^3}.
    \end{aligned}
  \end{equation*}
  We conclude using Corollary~\ref{coro:invL0}.
\end{proof}

Now we are ready to prove the main result. We will actually prove a
stronger version of Theorem~\ref{thm:main} with higher order error
estimate. Theorem~\ref{thm:main} is clearly a corollary of the
following result.
\begin{thmmain}
  Under the same assumptions as Theorem~\ref{thm:main}, there exists a
  unique $u = (\nu_+, \nu_-, V) \in (H_n^2)^3$ such that
  \begin{itemize}
  \item $u$ is a solution to the Euler-Lagrange equation,
    \begin{equation*}
      \mc{F}(u) = 0;
    \end{equation*}
  \item $u$ is close to $u^0$
    \begin{equation*}
      \norm{u - u^0}_{(H_n^2)^3} \leq \delta \veps^3. 
    \end{equation*}
  \end{itemize}
\end{thmmain}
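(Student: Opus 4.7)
The plan is to prove Theorem $\bd{2'}$ by a Banach fixed-point argument applied to the modified Newton map
\begin{equation*}
T(u) \;=\; u \;-\; \mc{L}_{u^0}^{-1}\, \mc{F}(u),
\end{equation*}
on a ball $B_{\gamma\veps^3}(u^0) = \{u\in (H_n^2)^3 : \norm{u - u^0}_{(H_n^2)^3} \le \gamma\veps^3\}$, where $u^0$ is the approximate solution from Section~6.2. Any fixed point of $T$ lies in the kernel of $\mc{L}_{u^0}^{-1}\mc{F}$, and since $\mc{L}_{u^0}^{-1}$ is injective by Corollary~\ref{coro:invL0}, this gives $\mc{F}(u) = 0$, i.e.\ a solution of the Euler--Lagrange equation in the required neighborhood of $u^0$.

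To apply the contraction mapping principle I need to verify two things. First, that $T$ maps $B_{\gamma\veps^3}(u^0)$ into itself for a suitable $\gamma$. For this I combine Proposition~\ref{prop:approximate}, which gives $\norm{\mc{F}(u^0)}_{(L_n^2)^3}\lesssim \veps^3$, with Corollary~\ref{coro:invL0}, which gives $\norm{\mc{L}_{u^0}^{-1}}_{\ms{L}((L_n^2)^3,(H_n^2)^3)} \lesssim 1$ uniformly in $n$. Hence $\norm{T(u^0) - u^0}_{(H_n^2)^3} \le C_0 \veps^3$ for some $C_0$ independent of $n$. Taking $\gamma = 2C_0$ and combining with the contraction estimate below, $T$ preserves the ball for $\veps$ small.

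Second, I need $T$ to be a contraction on $B_{\gamma\veps^3}(u^0)$. A direct computation using the identity
\begin{equation*}
T(u) - T(u') \;=\; -\,\mc{L}_{u^0}^{-1}\bigl(\mc{F}(u) - \mc{F}(u') - \mc{L}_{u^0}(u-u')\bigr)
\end{equation*}
together with Lemma~\ref{lem:diffL} yields
\begin{equation*}
\norm{T(u) - T(u')}_{(H_n^2)^3} \;\lesssim\; C(\gamma)\,\veps^{3/2}\, \norm{u - u'}_{(H_n^2)^3}.
\end{equation*}
Thus for $\veps \le \veps_0$ sufficiently small, $T$ is a $\tfrac12$-contraction. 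The Banach fixed-point theorem then produces a unique $u \in B_{\gamma\veps^3}(u^0)$ with $T(u)=u$, hence $\mc{F}(u) = 0$ and $\norm{u - u^0}_{(H_n^2)^3} \le \gamma\veps^3$. Setting $\delta = \gamma$ gives the claim; uniqueness within the ball follows from the uniqueness clause of the fixed-point theorem.

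The only subtle point is the interplay of scales in Lemma~\ref{lem:diffL}: the Sobolev embedding on $n\Gamma$ produces a factor of $n^{3/2} = \veps^{-3/2}$, which would be fatal if the iterates were not already known to lie within $O(\veps^3)$ of $u^0$. The gain here comes precisely from the high order of accuracy of the two-scale approximation: $\veps^{-3/2}\cdot \veps^3 = \veps^{3/2}\to 0$. This is what I expect to be the main conceptual obstacle, but it has already been isolated in Lemmas~\ref{lem:compare2} and~\ref{lem:diffL}, so the remaining fixed-point argument is a routine application of the Newton--Kantorovich scheme with the uniform (in $n$) invertibility of $\mc{L}_{\per}$ supplied by Assumption~\ref{assump:stability}.
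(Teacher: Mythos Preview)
Your proposal is correct and is essentially the same argument as the paper's proof: the paper also iterates the modified Newton map $u^{k+1}=u^k-\mc{L}_{u^0}^{-1}\mc{F}(u^k)$, uses Proposition~\ref{prop:approximate} and Corollary~\ref{coro:invL0} to bound the first step by $C_1\veps^3/2$, and then invokes Lemma~\ref{lem:diffL} via the identity $u^{k+1}-u^k=-\mc{L}_{u^0}^{-1}\bigl(\mc{F}(u^k)-\mc{F}(u^{k-1})-\mc{L}_{u^0}(u^k-u^{k-1})\bigr)$ to obtain a $\tfrac12$-contraction and close the induction. The only difference is packaging---you phrase it as a Banach fixed-point theorem while the paper writes out the iteration explicitly---and your remark on the $\veps^{-3/2}\cdot\veps^3$ scale balance is exactly the mechanism behind Lemma~\ref{lem:diffL}.
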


\begin{proof}
  Consider the nonlinear iteration
  \begin{equation}\label{eq:iter}
    u^{k+1} = u^k - \mc{L}_{u^0}^{-1} \mc{F}( u^k ),
  \end{equation}
  with initial condition $u^0$ given in \eqref{eq:defu0}.  

  Consider the first iteration ($k = 0$ in \eqref{eq:iter}), we get
  \begin{equation*}
    \norm{u^1 - u^0}_{(H_n^2)^3} = \norm{\mc{L}_{u^0}^{-1} \mc{F}( u^0
      )}_{(H_n^2)^3} \leq \norm{\mc{L}_{u^0}^{-1}}_{\ms{L}((L_n^2)^3,
      (H_n^2)^3)} \norm{\mc{F}( u^0 )}_{(L_n^2)^3}.
  \end{equation*}
  By Proposition~\ref{prop:approximate} and
  Corollary~\ref{coro:invL0}, there exists constant $C_1$ such that
  \begin{equation*}
    \norm{u^1 - u^0}_{(H_n^2)^3} \leq C_1 \veps^3 / 2. 
  \end{equation*}

  Suppose we have proved for all $k \leq k_0$, 
  \begin{equation}\label{eq:Mball}
    \norm{u^k - u^0}_{(H_n^2)^3} \leq C_1 \veps^3,
  \end{equation}
  by the iteration scheme \eqref{eq:iter}, we have for all $k \leq
  k_0$,
  \begin{equation*}
    \begin{aligned}
      u^{k+1} - u^{k} & = u^{k} - u^{k-1} - \mc{L}_{u^0}^{-1}
      ( \mc{F}(u^{k}) - \mc{F}(u^{k-1}) ) \\
      & = - \mc{L}_{u^0}^{-1} \bigl( \mc{F}(u^{k}) - \mc{F}(u^{k-1}) -
      \mc{L}_{u^0} ( u^{k} - u^{k-1}) \bigr).
    \end{aligned}
  \end{equation*}
  Hence, by \eqref{eq:Mball} and Lemma~\ref{lem:diffL},
  \begin{equation*}
    \norm{ u^{k+1} - u^{k} }_{(H_n^2)^3} \lesssim C(C_1) \veps^{3/2} 
    \norm{ u^{k} - u^{k-1} }_{(H_n^2)^3}.
  \end{equation*}
  For $\veps$ sufficiently small so that $ C(C_1)\veps^{3/2} < 1/2$,
  we then have
  \begin{equation*}
    \norm{ u^{k_0+1} - u^0 }_{(H_n^2)^3} \leq \sum_{k = 0}^{k_0}
    \norm{ u^{k+1} - u^{k} }_{(H_n^2)^3} \leq 2 \norm{ u^1 - u^0}_{(H_n^2)^3}
    \leq C_1 \veps^3. 
  \end{equation*}
  Therefore, by induction, we have for all $k\geq 0$, 
  \begin{equation*}
    \norm{ u^{k} - u^0 }_{(H_n^2)^3} \leq C_1 \veps^3,
  \end{equation*}
  and 
  \begin{equation}\label{eq:contract}
    \norm{ u^{k+1} - u^{k}}_{(H_n^2)^3} \leq \frac{1}{2} 
    \norm{ u^{k} - u^{k-1}}_{(H_n^2)^3}.
  \end{equation}
  Because of \eqref{eq:contract}, the iteration
  converges to $u^{\ast}$  and 
  \begin{equation*}
    \mc{F}(u^{\ast}) = 0. 
  \end{equation*}
In addition, we have
  \begin{equation*}
    \norm{ u^{\ast} - u^0 }_{(H_n^2)^3} \leq C_1 \veps^3. 
  \end{equation*}
  The local uniqueness also follows easily from \eqref{eq:contract}. 
\end{proof}

\bibliographystyle{amsplain}

\bibliography{spintfw}

\end{document}